\documentclass[journal]{IEEEtran}
\usepackage[section]{placeins}
\usepackage{cite}
\usepackage{array}
\usepackage{textcomp}
\usepackage{stfloats}
\usepackage{url}
\usepackage{verbatim}
\usepackage{graphicx}
\usepackage{booktabs}
\usepackage{subfigure}
\usepackage{xcolor}
\usepackage{amsthm, amsfonts}
\usepackage{bbm}
\usepackage{mathrsfs}
\usepackage{multicol}
\usepackage{epstopdf}
\newtheorem{theorem}{Theorem}
\newtheorem{lemma}[theorem]{Lemma}

\usepackage{balance}
\usepackage{tikz}
\usepackage{mathtools}
\allowdisplaybreaks
\usepackage{microtype}
\usepackage{amsmath,amssymb,mathtools}
\newcommand{\mat}[1]{\mathbf{#1}}     
\newcommand{\vect}[1]{\mathbf{#1}}

\usetikzlibrary{positioning,calc,arrows.meta,shapes,fit,decorations.pathreplacing}

\usepackage{algorithm}
\usepackage[noend]{algpseudocode}
\algrenewcommand\algorithmicrequire{\textbf{Inputs:}}
\algrenewcommand\algorithmicensure{\textbf{Output:}}
\algrenewcommand\algorithmiccomment[1]{\hfill{\footnotesize$\triangleright$~#1}}

\def\BibTeX{{\rm B\kern-.05em{\sc i\kern-.025em b}\kern-.08em
    T\kern-.1667em\lower.7ex\hbox{E}\kern-.125emX}}
\renewcommand{\algorithmicrequire}{\textbf{INPUT:}} 
\renewcommand{\algorithmicensure}{\textbf{OUTPUT:}}

\newcommand{\atanTwo}{\operatorname{atan2}}

\NewDocumentCommand{\HuberFunc}{m m}{
  \ensuremath{\operatorname{Huber}_{#2}\!\left(#1\right)}
}

\begin{document}
\title{Multimodal Large Language Model Enabled Robust Beamforming for HAP Downlink Communications}
\author{
Xiaoyu Xing,~\IEEEmembership{Graduate Student Member,~IEEE}, 
Peng Yang,~\IEEEmembership{Member,~IEEE}, 
Guoquan Tao, Dingyi Lu, \\ 
Zehui Xiong,~\IEEEmembership{Senior Member,~IEEE}, 
and Xianbin Cao,~\IEEEmembership{Senior Member,~IEEE}

 \thanks{
X. Xing, P. Yang, D. Lu, and X. Cao are with the School of Electronic Information Engineering, Beihang University, Beijing 100191, China. 
G. Tao is with the School of Institute of Unmanned Systems, Beihang University, Beijing 100191, China. 
Z. Xiong is with the School of Electronics, Electrical Engineering and Computer Science, Queen’s University Belfast, Belfast BT7 1NN, U.K.
}
}


\maketitle

\begin{abstract}
Small changes in high altitude platform (HAP) attitude can cause significant deviations in HAP downlink beam directions, thereby severely degrading HAP downlink communication performance. 
In this paper, we develop a multimodal large language model (LLM) enabled beamforming framework to achieve robust HAP downlink communications.
Specifically, we design a vision-language LLM (VL-LLM) that learns from multivariate flight telemetry to {forecast} short-term HAP attitudes {under platform shaking and support delay-aware proactive beam steering}.
We design an {offline forecast-error calibration procedure} to obtain upper bounds on {forecast} errors and improve the reliability of proactive {analog beam steering}.
{Based on the attitude forecasts, we proactively update the analog beamformer and propose a QoS-driven beamforming and admission method with a lightweight feasibility-enforcement step to satisfy instantaneous transmit-power and QoS requirements.}
Simulation results indicate that the designed VL-LLM can accurately capture changes in the HAP attitude and the proposed beamforming method achieves a 22.1\% higher user service ratio and a 12.5\% higher sum-rate than representative baselines.
{The measured mean and p99 total latencies are 36.24 ms and 40.13 ms, respectively, supporting practical delay-aware deployment.}

\end{abstract}

\begin{IEEEkeywords}
 High altitude platform (HAP), platform shaking, multimodal large language model, robust beamforming
\end{IEEEkeywords}

\section{Introduction}
\IEEEPARstart{H}{igh-altitude} platforms (HAPs) operate quasi-stationary in the lower stratosphere and are a promising enabler for wide-area, on-demand coverage in beyond-5G/6G networks \cite{10918743,10906539,10981845,10926897,11087614,AN2025104036}.
Compared with low-altitude unmanned aerial vehicles (UAVs), HAPs offer a much larger footprint and multi-month endurance, while their 20-25~km altitude keeps propagation delay within a few milliseconds and avoids the high deployment cost of satellites {\cite{11169418,10675381}}.
To support high data rates over long-distance downlinks, mmWave links with antenna arrays are highly attractive for HAP communications, where hybrid analog/digital beamforming can be explored to reduce radio frequency (RF) chain cost and power consumption \cite{10680080,6717211,7389996}.

A key challenge in HAP beamforming is maintaining reliable beam alignment under continuous platform attitude shaking.
However, HAPs are continuously perturbed by wind and turbulence. 
{Even small attitude shaking can tilt the array plane and misalign the beam with the line-of-sight (LoS) direction. 
This causes noticeable main-lobe gain loss for uniform planar arrays (UPAs) \cite{10927643,10736523}.}
Therefore, the steering angles must be adjusted in both azimuth and elevation using instantaneous HAP attitude information. 
Otherwise, narrow UPA beams can incur substantial main-lobe gain loss.
A practical {solution} is to {forecast short-term} HAP attitudes and {proactively} steer the beam before misalignment degrades link performance.
{However, in practical HAP downlinks, beam updates are not instantaneous. 
{Attitude forecasting}, communication optimization, and actuation together introduce non-negligible delay. 
As a result, a practical forecasting framework must support delay-aware proactive beam steering rather than idealized zero-latency compensation.
Moreover, forecast errors translate into pointing uncertainty. 
Thus, the challenge extends beyond forecasting future attitudes. 
It also requires translating imperfect forecasts into robust hybrid beamforming decisions under quality-of-service (QoS) and power constraints.
In this paper, we investigate delay-aware robust hybrid beamforming for HAP downlink communications by leveraging short-term HAP attitude forecasting.}

\subsection{State of the Art}
Overall, HAP communications face a fundamental challenge in maintaining beam alignment {across wide-area coverage}, particularly at mmWave and higher frequencies where narrow beams are {essential} \cite{10422712,ghanbari2025future}.
Even small {platform attitude shaking} can reduce array gain and degrade reliability and throughput, {thereby} motivating robust beam management.
Recent studies can be broadly grouped into short-term beam tracking, large language model (LLM)-enabled communication intelligence, and learning-assisted beamforming.

\textbf{Short-term beam tracking and {forecasting} for air-based platforms:}
A large body of work focuses on short-term beam tracking. 
In \cite{10535476}, a multi-armed bandit beam-tracking scheme for HAPs sequentially learns the best beam and achieves outage probability close to an ideal direction-of-arrival tracker while keeping complexity linear in the codebook size.
In \cite{11072409}, a lightweight global positioning system (GPS)-aided deep learning model extracts motion-pattern features via a convolutional neural network (CNN) and uses an encoder-decoder gated recurrent unit (GRU) to forecast future beam indices over multiple slots.
In \cite{10717332}, an attitude compensation-based beam-tracking (ACBT) method iteratively performs attitude compensation and beam tracking to cope with wind-induced UAV attitude {shaking}.

However, these works \cite{10535476,11072409,10717332} mainly focus on beam tracking {or forecasting accuracy, leaving the translation of forecast errors into analog beamforming uncertainty and multiuser QoS-aware beamforming decisions largely unaddressed.}
This motivates short-term proactive beam steering with robustness considerations over the next few slots.

\textbf{Foundation models and LLMs for communications:}
Recent works have begun to explore LLMs for communication-related tasks.
In \cite{10582827}, the authors discuss key challenges in enabling existing pre-trained LLMs to interpret and reason over communication system data.
In \cite{nikbakht2024tspecllmopensourcedatasetllm}, the authors construct a technical-document dataset to facilitate domain adaptation and knowledge injection for communications-oriented LLM applications.
In \cite{piovesan2024telecomlanguagemodelslarge}, the authors show that relatively small language models, when augmented with retrieval-augmented generation (RAG), can already capture and utilize communication knowledge effectively.
In \cite{bornea2024telcoragnavigatingchallengesretrievalaugmented}, the authors further refine the RAG method for specialized technical corpora to improve retrieval reliability and downstream reasoning performance for telecommunications.
In \cite{10679152}, the authors introduced a generative AI agent framework for satellite networks, in which LLMs and retrieval-augmented generation (RAG) were used for customized system modeling and a mixture-of-experts (MoE) method was employed for transmission optimization. In \cite{10681129}, the authors studied the use of large models for aerial edges through an integrated air-ground edge-cloud framework, with an emphasis on bandwidth-aware task allocation, data transmission, and model evolution for UAV-based multimodal inference. In \cite{11352781}, the authors combined channel extrapolation and generative AI for CSI generation in dynamic wireless environments, and showed that diffusion-based generative models improved robustness in time-varying communication tasks.

{However, these works \cite{10582827,nikbakht2024tspecllmopensourcedatasetllm,piovesan2024telecomlanguagemodelslarge,bornea2024telcoragnavigatingchallengesretrievalaugmented,10679152,10681129,11352781} mainly target communication knowledge utilization, customized network modeling, edge-cloud large-model collaboration, or security-oriented CSI generation.
They do not study multivariate HAP attitude forecasting for proactive beam steering or its coupling with robust hybrid beamforming.}

\textbf{Learning-assisted beamforming:}
Beamforming problems are commonly addressed using two classes of algorithmic tools.
The first class relies on model-based iterative solvers, such as weighted minimum mean-square error (WMMSE) and successive convex approximation (SCA). 
Recent WMMSE-based designs still involve multi-block alternating optimization with repeated variable updates \cite{10977774,11130524,11010920}, while SCA-based designs require iteratively forming and solving convex surrogate subproblems \cite{ZHANG2026156209,ZHU2026103001}.
As a result, runtime is often dominated by iteration counts and repeated numerical solves, which is challenging for online deployment, 
especially as the number of users and antennas grows.
The second class uses learning-based forecasters for fast inference. 
However, without explicit feasibility guarantees, such forecasters may output infeasible admission decisions or beamformers, leading to QoS violations, which can compromise reliability in online deployment \cite{10333602,11148833}.

To tackle the above issue, recent works have blended learning with optimization to accelerate resource allocation while preserving the feasibility of solutions. 
In \cite{10333602}, the authors introduced a knowledge distillation framework where an optimization routine guided a neural forecaster.
In \cite{11148833}, the authors demonstrated karush-kuhn-tucker (KKT) guided dual learning, where learning a small number of dual variables enabled structured beamformer reconstruction with lower computational complexity. 
In \cite{wang2025differentiableprojectionbasedlearnoptimize}, differentiable projection modules were used to project network outputs onto constraint sets. 
To avoid infeasible solutions, \cite{deng2025hophomeomorphicpolarlearning} developed homeomorphic polar learning using an invertible mapping that strictly satisfied convex or non-convex constraints. 

However, most learning-assisted beamforming methods \cite{10333602,wang2025differentiableprojectionbasedlearnoptimize,deng2025hophomeomorphicpolarlearning,11148833} assume that the analog beam setting is fixed or perfectly known.
They do not explicitly model the beam pointing errors caused by imperfect {forecasts}. This is critical in {forecast-aided} hybrid beamforming for HAP downlinks.

{In summary, prior works typically address beam tracking, communication intelligence, dynamic CSI generation, or learning-based beamforming in isolation. 
The joint design of multivariate HAP attitude forecasting and robust hybrid beamforming remains largely unexplored. 
This includes the explicit modeling of forecast-error-induced beam pointing uncertainty and its impact on multiuser QoS guarantees.}

\subsection{Motivations and Contributions}

Overall, most existing schemes remain reactive, and beam correction is activated only after misalignment occurs, introducing correction delay and transient array-gain loss.
This motivates a forecast-then-beamform {method}.
We {forecast} short-term HAP attitudes and {proactively perform beamforming, thereby reducing beam misalignment before it causes noticeable link-performance degradation.}

Accurate attitude {forecasting} from multivariate flight telemetry is nontrivial and is naturally a time-series forecasting problem.
In the time-series domain, recent works \cite{DBLP:journals/corr/abs-2310-09751,jin2024timellm,10892257,10.1145/3589334.3645434} show that LLMs can be repurposed for {forecasting} when inputs are appropriately encoded. 
This design reuses the strong sequence modeling ability of LLMs. 
It also allows domain instructions or priors to be injected through prompts with limited extra parameters.
Researchers \cite{wang2025timemixer,wu2023timesnet} have converted time series into image-like representations for vision-based modeling. 
This transformation exposes temporal structures as spatial patterns. 
It allows CNN/vision transformer (ViT) \cite{dosovitskiy2021an} to extract hierarchical features that capture latent temporal relations.

However, LLM-based methods \cite{jin2024timellm,10892257,10.1145/3589334.3645434} face a modality gap between continuous time-series and discrete tokens, and they are not explicitly pre-trained for fine-grained temporal patterns, which can limit forecasting accuracy.
Vision-based methods \cite{wang2025timemixer,wu2023timesnet} often provide limited semantic interpretability, which makes it difficult to incorporate domain knowledge or task-specific priors.
This motivates a vision-language (VL) {forecasting model} that combines visual feature extraction with LLM sequence modeling to improve short-term attitude forecasting for proactive beam steering.
Moreover, proactive attitude-aware beam steering must update the analog beam alignment for the next few time slots. 
{Meanwhile, the downlink still requires online multiuser digital beamforming and user admission to satisfy transmit-power and QoS constraints.
More broadly, recent robustness-oriented studies in Internet of Things (IoT) systems also reinforce the view that robustness should be treated as a primary design objective in dynamic networked systems\cite{chencuiot2026,chen2026lego,chen2025fast,chen2024distributed}.}

{These observations suggest that the core challenge is not forecasting alone, nor beamforming alone, but the conversion from imperfect attitude forecasts to robust and feasible hybrid beamforming actions. 
This motivates our forecast-then-beamform framework. 
Specifically, a VL forecasting model generates short-term attitude forecasts from multivariate flight telemetry, offline calibration converts forecast errors into analog-pointing robustness margins, and a QoS-driven digital beamforming and admission stage ensures instantaneous online feasibility.}
The main contributions of this paper are summarized as follows:
\begin{itemize}
\item {\textbf{Delay-aware forecast-then-beamform framework for HAP downlinks.}
We propose a multimodal framework that couples short-term HAP attitude forecasting with robust hybrid beamforming. 
The forecast attitudes provide timely input for proactive analog beam steering under practical forecast-to-actuation delay.}

\item {\textbf{Vision-language forecasting model for short-term attitude forecasting.}
We develop a VL-LLM forecasting model that combines vision-based feature extraction and LLM-based sequence modeling, leveraging their complementary strengths to improve multivariate HAP attitude forecasting for proactive beam steering.}

\item {\textbf{Offline calibration and QoS-driven beamforming with lightweight feasibility enforcement.}
We design an offline calibration procedure that estimates a high-confidence bound on the attitude forecasting error from an independent validation set.
The calibrated bound is used to account for pointing uncertainty in analog steering, while the proposed QoS-driven digital beamforming and admission stage ensures instantaneous transmit-power and user-rate feasibility online.}

\item {
Finally, we compare the proposed methods with representative forecasting and beamforming baselines and conduct comprehensive ablation and robustness studies. 
Simulation results show that the proposed framework improves both attitude forecasting accuracy and communication service performance over representative baselines.}
\end{itemize}

$Notation$: 
Bold uppercase and lowercase letters (for example, $\mathbf{A}$ and $\mathbf{x}$) denote matrices and column vectors, respectively. 
Plain letters denote scalars.
The imaginary unit is $j=\sqrt{-1}$.
$\mathbb{R}^{p\times q}$ and $\mathbb{C}^{p\times q}$ denote the sets of real and complex matrices.
$(\cdot)^{\top}$ and $(\cdot)^{\mathrm{H}}$ denote transpose and Hermitian transpose.
$|\cdot|$ denotes the absolute value, $\|\cdot\|_2$ the Euclidean norm, and $\|\cdot\|_{F}$ the Frobenius norm.
$\mathbb{E}\{\cdot\}$ denotes expectation and $\triangleq$ denotes definition.
$\otimes$ denotes the Kronecker product and $\mathrm{diag}(\cdot)$ forms a diagonal matrix.
For vectors, $\min(\cdot,\cdot)$ and $\max(\cdot,\cdot)$ operate elementwise, and $[\cdot]_+$ denotes elementwise projection onto $\mathbb{R}_+$.
$\mathcal{CN}(\mu,\sigma^2)$ denotes a circularly symmetric complex Gaussian distribution with mean $\mu$ and variance $\sigma^2$.
$\operatorname{atan2}(\cdot,\cdot)$ denotes the four-quadrant arctangent.
For a rotation matrix $\mathbf{R}\in\mathrm{SO}(3)$, $\log(\mathbf{R})\in\mathfrak{so}(3)$ denotes the matrix logarithm mapping to the Lie algebra of $\mathrm{SO}(3)$, and $\mathrm{vee}(\cdot):\mathfrak{so}(3)\rightarrow\mathbb{R}^3$ maps a skew-symmetric matrix to its corresponding vector representation.
Frequently used paper-specific symbols are summarized in Table~\ref{tab:key_symbols}.

\begin{table}[t]
\caption{{Key Symbols Used in the Proposed Framework}}
\label{tab:key_symbols}
\centering
\begin{tabular}{ll}
\hline
{Symbol} & {Meaning} \\
\hline
{$\Delta T$} & {Slot duration} \\
{$T_{\mathrm{lat}}$} & {End-to-end processing latency} \\
{$d$} & {Decision delay} \\
{$H_{\mathrm{pred}}$} & {Forecast horizon} \\
{$\mathcal{T}_t$} & {Target-slot set} \\
{$\mathbf{A}_{t+h|t}$} & {Forecast-based analog beamformer scheduled for slot $t+h$} \\
{$\mathbf{A}_{\tau}$} & {Analog beamformer actually applied in target-slot $\tau$} \\
{$\Delta\boldsymbol\omega_{\tau}$} & {Pointing mismatch induced by forecasting error} \\
{$\delta_{\omega}$} & {Offline-calibrated pointing-error bound} \\
{$\rho,\rho_s$} & {Confidence parameters in the certificates} \\
{$\alpha_k$} & {Admission indicator of user $k$} \\
{$r_k^{\min}$} & {Minimum rate requirement of user $k$} \\
{QAR} & {QoS admission ratio} \\
{$N_{\mathrm{RF}}$} & {Number of RF chains} \\
\hline
\end{tabular}
\end{table}

\section{System Model and Problem Formulation}\label{sec:system}

\subsection{System Model}\label{subsec:sys_model}

\begin{figure*}[t]
\centering
\begin{minipage}{0.4\linewidth}
  \centering
  \includegraphics[width=\linewidth]{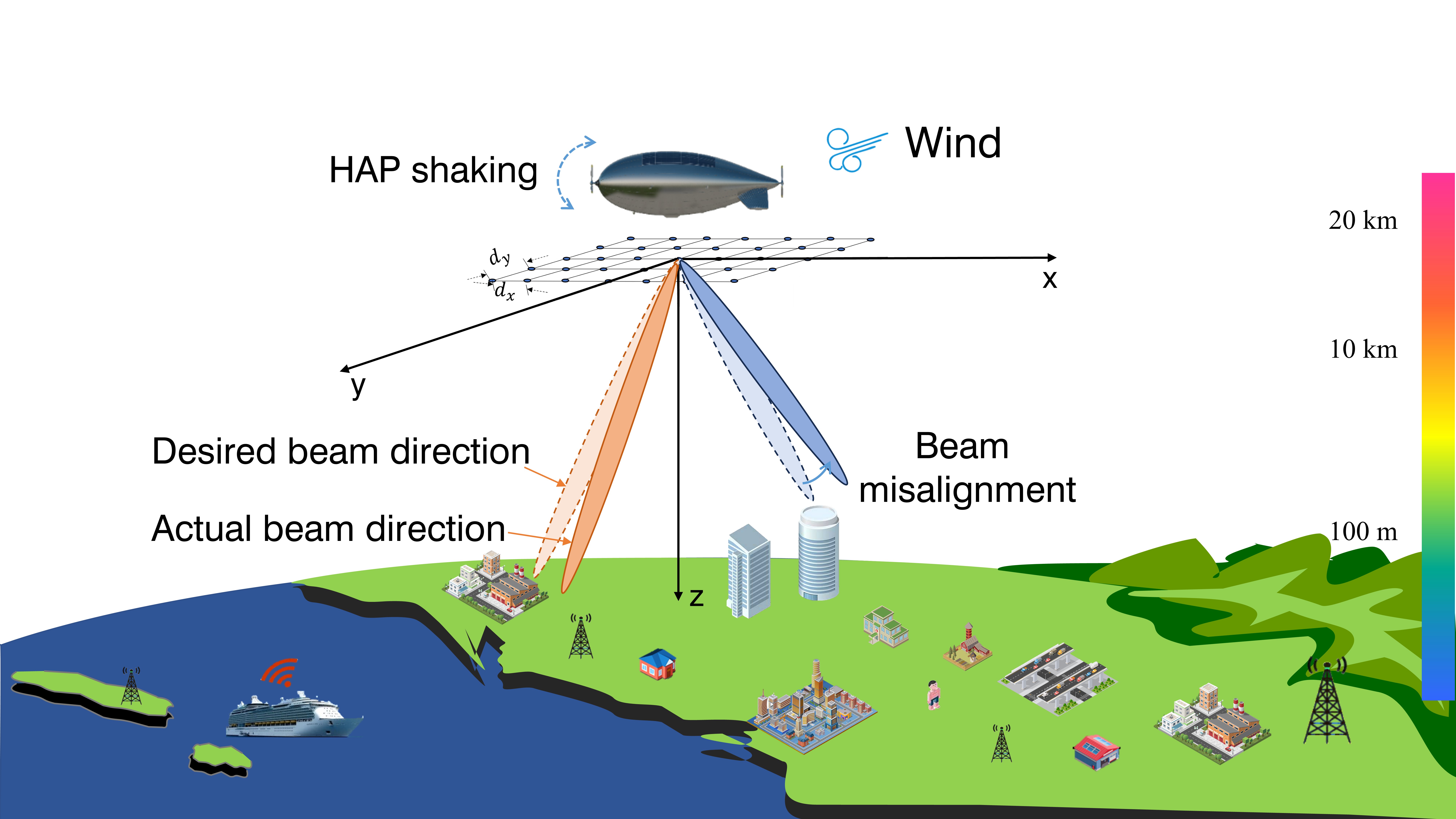}
  \caption{Downlink beam misalignment for a HAP under wind-induced attitude {shaking}. The HAP carries an $M_x \!\times\! M_y$ UPA. Dashed lobes denote desired beams, and solid lobes denote actual beams under attitude errors.}
  \label{fig:beam_misalignment}
\end{minipage} \qquad \qquad 
\begin{minipage}{0.4\linewidth}
  \centering
  \includegraphics[width=\linewidth]{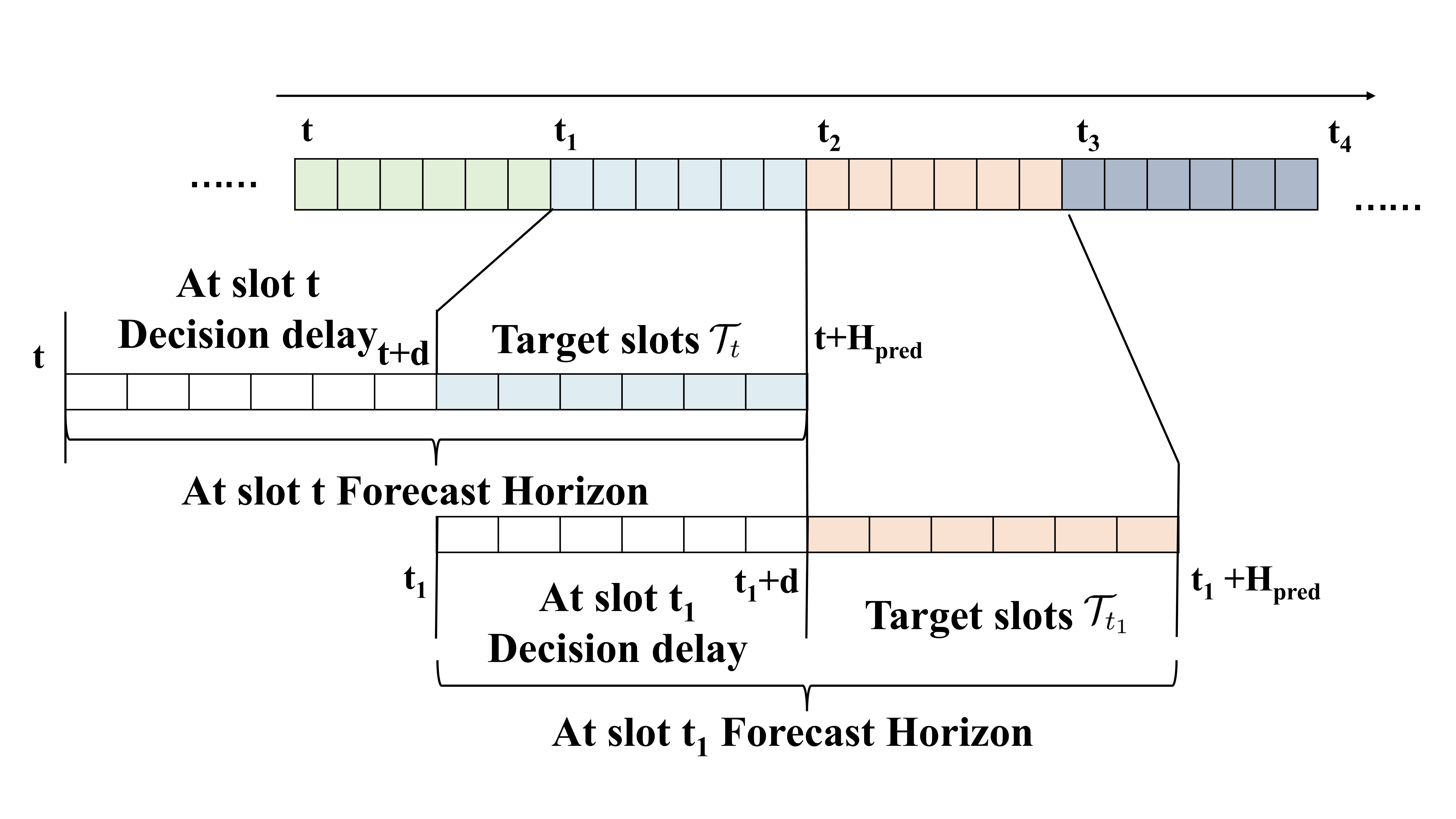}
  \caption{{Forecast} horizon with decision delay $d$ and target-slot set $\mathcal{T}_t=\{t{+}d{+}1,\ldots,t{+}H_{\mathrm{pred}}\}$. The figure shows the case $H_{\mathrm{pred}}=2d$.}
  \label{fig:horizon}
\end{minipage}
\end{figure*}

As shown in Fig.~\ref{fig:beam_misalignment}, we consider an HAP downlink communication system where an HAP equipped with a $M_x\times M_y$ UPA ($M\!=\!M_xM_y$) serves $K$ single-antenna ground users indexed by $\mathcal{K}=\{1,\ldots,K\}$.
The HAP is located at $\mathbf{r}_{\mathrm{HAP}}=(x_{\mathrm{HAP}},y_{\mathrm{HAP}},h_{\mathrm{HAP}})$ and the user $k$ is located at $\mathbf{r}_k=(x_k,y_k,0)$. 
{However, stratospheric winds may perturb the HAP attitude, leading to misalignment between the desired beam pointing and the actual UPA beam.}
We use a slotted time model and assume that the HAP attitude remains constant within each slot.

\subsubsection{{Forecast} horizon model}
As shown in Fig.~\ref{fig:horizon}, we consider a slotted timeline. The figure shows successive forecast horizons generated at slot $t$ and at the subsequent slot.
At the beginning of slot $t$, the {forecasting model} outputs an $H_{\mathrm{pred}}$-step attitude forecast.
Due to the end-to-end processing time $T_{\mathrm{lat}}$, the updated beamformer cannot take effect immediately.
We define the resulting decision delay as
\begin{equation}
d \triangleq \left\lceil \frac{T_{\mathrm{lat}}}{\Delta T} \right\rceil ,
\label{eq:decision_delay_def}
\end{equation}
where $\Delta T$ is the slot duration.
Therefore, only the forecasts for horizons $h\in\{d{+}1,\ldots,H_{\mathrm{pred}}\}$ are available for computing the analog beamformer sequence.
Accordingly, we define the target-slot set as
\begin{equation}
\mathcal{T}_t \triangleq \{t{+}d{+}1,\ldots,t{+}H_{\mathrm{pred}}\},\qquad |\mathcal{T}_t|=H_{\mathrm{pred}}-d.
\label{eq:target_slot_set}
\end{equation}
which includes the future slots within the {forecast} horizon that remain available after the decision delay.

We then compute an analog beamformer sequence $\{\mathbf{A}_{t+h|t}\}_{h=d+1}^{H_{\mathrm{pred}}}$ for $\tau=t+h\in\mathcal{T}_t$.
If a slot $\tau$ belongs to multiple target-slot sets $\mathcal{T}_{t'}$, we use the beamformer computed at the most recent {forecast} time, i.e.,
\begin{equation}
\mathbf{A}_\tau=\mathbf{A}_{\tau|t^\star(\tau)},\qquad
t^\star(\tau)\triangleq \max\{t':\,\tau\in\mathcal{T}_{t'}\}.
\label{eq:latest_cover_rule}
\end{equation}

\subsubsection{{Hybrid beamforming signal model}}
Hybrid beamforming is widely adopted to reduce the number of RF chains and the associated cost and power consumption \cite{6717211,7389996}.
We adopt an analog beamformer $\mathbf{A}\in\mathbb{C}^{M\times N_{\mathrm{RF}}}$ and a digital beamformer $\mathbf{D}\in\mathbb{C}^{N_{\mathrm{RF}}\times K}$.
At the beginning of slot $t$, we compute an analog beamformer sequence for $\mathcal{T}_t$ based on the HAP attitude {forecasts}, and then optimize $\mathbf{D}$ online in each target-slot using the instantaneous effective channel $\mathbf{H}_{\mathrm{eff}}(\tau)$ after applying the analog beamformer.
For clarity, we set $N_{\mathrm{RF}}=K$,
and the case $N_{\mathrm{RF}}<K$ can be handled by user scheduling/grouping and serving fewer data streams.
The transmitted signal is $\mathbf{x}(t)=\mathbf{A}\mathbf{D}\mathbf{s}(t)$ with $\mathbb{E}\{\mathbf{s}(t)\mathbf{s}^{\mathrm H}(t)\}=\mathbf{I}_K$.
Let $\mathbf{D}=[\mathbf{d}_1,\ldots,\mathbf{d}_K]$, where $\mathbf{d}_k$ is the digital beamformer for user $k$.
The instantaneous transmit power satisfies $\|\mathbf{A}\mathbf{D}\|_F^2\le P_{\max}$, and a constant circuit power $P_c$ is included for energy efficiency (EE) evaluation.

Let $\mathbf{h}_k(t)\in\mathbb{C}^{M}$ denote the downlink channel from the HAP array to user $k$ in slot $t$, and define
$\mathbf{H}(t)\triangleq [\mathbf{h}_1(t),\ldots,\mathbf{h}_K(t)]\in\mathbb{C}^{M\times K}$.
The received signal at user $k$ is
\begin{equation}
y_k(t)=\mathbf{h}_k^{\mathrm H}(t)\mathbf{A}\mathbf{D}\mathbf{s}(t)+n_k(t),
\end{equation}
where $n_k(t)\sim\mathcal{CN}(0,\sigma^2)$ denotes noise.
The instantaneous SINR of user $k$ is
\begin{equation}
\mathrm{SINR}_k(t)
=
\frac{\big|\mathbf{h}_k^{\mathrm H}(t)\mathbf{A}\mathbf{d}_k\big|^2}
{\sum_{j\neq k}\big|\mathbf{h}_k^{\mathrm H}(t)\mathbf{A}\mathbf{d}_j\big|^2 + \sigma^2 } ,
\label{eq:sinr_def}
\end{equation}
and the achievable rate is
\begin{equation}
R_k(t)=B\log_2\!\big(1+\mathrm{SINR}_k(t)\big),
\label{eq:sinr_rate}
\end{equation}
where $B$ is the system bandwidth.

\subsubsection{Modeling of impact of HAP shaking}
Let $\mathcal{W}$ and $\mathcal{B}$ denote the world frame and body frame, respectively.
The ZYX attitude at time $t$ is $(\psi_t,\theta_t,\phi_t)$ and the corresponding rotation matrices are \cite{11045512}
\begin{align}
\mathbf{R}_{\mathcal{B}\to\mathcal{W}}(t)&=\mathbf{R}_z(\psi_t)\,\mathbf{R}_y(\theta_t)\,\mathbf{R}_x(\phi_t),\notag\\
\mathbf{R}_{\mathcal{W}\to\mathcal{B}}(t)&=\mathbf{R}_{\mathcal{B}\to\mathcal{W}}^{\top}(t).
\label{eq:Rbw}
\end{align}

We represent the attitude by a rotation matrix $\mathbf{R}(t)\in \mathrm{SO}(3)$, which avoids the discontinuities and coupling issues of Euler-angle parameterizations \cite{geist2023rotations}.

Over the short forecast horizon, we assume a quasi-static geometry. 
Ground users are stationary.
Within each slot, the LoS direction and distance are treated as constant and are given by

\begin{equation}
\mathbf{e}_k \triangleq \frac{\mathbf{r}_k-\mathbf{r}_{\mathrm HAP}}{\|\mathbf{r}_k-\mathbf{r}_{\mathrm HAP}\|},\qquad
d_k \triangleq \|\mathbf{r}_k-\mathbf{r}_{\mathrm HAP}\|.
\label{eq:los_world_def}
\end{equation}

{Accordingly, we model the channel of user $k$ in slot $t$ as
\begin{equation}
\mathbf{h}_k(t)=
\sqrt{\frac{\kappa_k}{\kappa_k+1}}\,\mathbf{h}^{\mathrm{LoS}}_k(t)
+
\sqrt{\frac{1}{\kappa_k+1}}\,\mathbf{h}^{\mathrm{NLoS}}_k(t),
\label{eq:rician_channel}
\end{equation}
where $\kappa_k$ is the Rician factor, $\mathbf{h}^{\mathrm{LoS}}_k(t)$ is determined by the attitude-dependent steering geometry, and $\mathbf{h}^{\mathrm{NLoS}}_k(t)$ captures the diffuse component.}

We next specify how the attitude {forecast} is translated into phase-only analog beamforming in each target-slot.
Consider a target-slot $\tau=t+h$ with $h\in\{d{+}1,\ldots,H_{\mathrm{pred}}\}$.
Given the {forecast} attitude $(\widehat{\psi}_{t+h|t},\widehat{\theta}_{t+h|t},\widehat{\phi}_{t+h|t})$, we map the LoS direction of user $k$ into the {corresponding} body frame as \cite{6717211,7389996}
\begin{equation}
\mathbf{u}^{(\mathcal{B})}_{k,t+h|t}
=
\mathbf{R}_{\mathcal{W}\to\mathcal{B}}(\widehat{\psi}_{t+h|t},\widehat{\theta}_{t+h|t},\widehat{\phi}_{t+h|t})\,
\mathbf{e}_k,
\label{eq:pred_body_dir}
\end{equation}
where $\mathbf{u}^{(\mathcal{B})}_{k,t+h|t}=[u_x,u_y,u_z]^{\top}$.
The corresponding steering angles are
\begin{equation}
(\widetilde{\vartheta}_{k,t+h|t},\widetilde{\varphi}_{k,t+h|t})
=
\big(\arccos(u_z),\,\operatorname{atan2}(u_y,u_x)\big).
\label{eq:pred_body_angles}
\end{equation}

For the UPA, we use inter-element spacings $(d_x,d_y)$ and index elements by $(m,n)$ with $m=0,\ldots,M_x-1$ and $n=0,\ldots,M_y-1$.
The phase-only analog beam for user $k$ is implemented by per-element phase shifters following standard hybrid beamforming and array-response constructions \cite{6717211,7389996} as
\begin{equation}
\label{eq:upa_steering_vector}
\begin{split}
\big[\mathbf{w}_{k,t+h|t}\big]_{m,n}
&=
\frac{1}{\sqrt{M}}
\exp\!\Bigg(
j\frac{2\pi}{\lambda_c}\Big(
m d_x\sin\widetilde{\vartheta}_{k,t+h|t}\cos\widetilde{\varphi}_{k,t+h|t} \\
&\qquad\qquad
+
n d_y\sin\widetilde{\vartheta}_{k,t+h|t}\sin\widetilde{\varphi}_{k,t+h|t}
\Big)\Bigg).
\end{split}
\end{equation}
where $\lambda_c$ is the carrier wavelength.
With $N_{\mathrm{RF}}$ RF chains, the analog beamformer used in target-slot $t+h$ is formed by stacking user beams \cite{6717211,7389996} as
\begin{align}
\mathbf{A}_{t+h|t}
=
&\big[\mathbf{w}_{1,t+h|t},\ldots,\mathbf{w}_{K,t+h|t}\big]
\in\mathbb{C}^{M\times N_{\mathrm{RF}}},\notag\\
&\qquad
|[\mathbf{A}_{t+h|t}]_{p,q}|=\frac{1}{\sqrt{M}}.
\label{eq:pred_analog_beamformer_matrix}
\end{align}

During online operation, $\{\mathbf{A}_{t+h|t}\}_{h=d+1}^{H_{\mathrm{pred}}}$ is applied across the target-slots in $\mathcal{T}_t$ according to \eqref{eq:latest_cover_rule}, while the digital beamformer is updated per target-slot using the instantaneous channel estimate.

For a target-slot $\tau$, the analog beamformer is fixed as $\mathbf{A}_{\tau|t^\star(\tau)}$.
At the beginning of slot $\tau$, the HAP estimates the instantaneous channel and forms
\begin{equation}
\mathbf{H}_{\mathrm{eff}}(\tau)=\mathbf{H}^{\mathrm H}(\tau)\mathbf{A}_{\tau|t^\star(\tau)}.
\label{eq:Heff_def}
\end{equation}

Based on $\mathbf{H}_{\mathrm{eff}}(\tau)$, the digital beamformer is computed online under the instantaneous power and rate constraints.

In target-slot $\tau$, denote the {forecast attitude} by $\widehat{\mathbf{R}}_{\tau|t^\star(\tau)}$ and the realized attitude by $\mathbf{R}_\tau$.
We capture the overall pointing error relevant to analog beamforming by the relative rotation
\begin{equation}
\Delta\boldsymbol\omega_{\tau}
\triangleq
\mathrm{vee}\!\Big(\log\big(\widehat{\mathbf{R}}_{\tau|t^\star(\tau)}^{\top}\mathbf{R}_\tau\big)\Big)\in\mathbb{R}^3,
\label{eq:delta_omega_def}
\end{equation}
which captures the pointing mismatch caused by attitude {forecast} errors.
In online digital beamforming, $\Delta\boldsymbol\omega_{\tau}$ is treated as an uncertainty. 
In offline calibration, it is computed from ground-truth attitudes.

We focus on the pointing mismatch caused by attitude forecast errors.
For user $k$, the normalized detuning vector $\boldsymbol\xi_k=[\xi_{x,k},\xi_{y,k}]^{\top}$ admits a first-order model
\begin{equation}
\begin{bmatrix}\xi_{x,k}\\ \xi_{y,k}\end{bmatrix}
\approx
\mathbf{J}_{\xi,k}\,\Delta\boldsymbol\omega_{\tau},
\label{eq:Jxi_closed}
\end{equation}
where $\boldsymbol\xi_k$ is a 2D detuning in the direction-cosine domain and $\mathbf{J}_{\xi,k}\in\mathbb{R}^{2\times 3}$ is the corresponding Jacobian evaluated at the operating point associated with $\mathbf{A}_{\tau|t^\star(\tau)}$.

Under small detuning, the normalized array factor yields a quadratic main-lobe gain loss
\begin{align}
&\Delta G_{\mathrm{UPA},k}\approx c_x\,\xi_{x,k}^2+c_y\,\xi_{y,k}^2,\notag\\
\quad
&c_x=\tfrac{\pi^2}{3}(M_x^2-1),\ \ c_y=\tfrac{\pi^2}{3}(M_y^2-1).
\label{eq:loss_quad}
\end{align}

If $\Delta\boldsymbol\omega_{\tau}$ has mean $\boldsymbol\mu_\omega$ and covariance $\boldsymbol\Sigma_\omega$, then
\begin{equation}
\mathbb{E}\!\begin{bmatrix}\xi_{x,k}\\ \xi_{y,k}\end{bmatrix}
=\mathbf{J}_{\xi,k}\,\boldsymbol\mu_{\omega},\quad
\boldsymbol\Sigma_{\xi,k}
=\mathbf{J}_{\xi,k}\,\boldsymbol\Sigma_{\omega}\,\mathbf{J}_{\xi,k}^{\top}.
\label{eq:xi_stats_closed}
\end{equation}

We further define a scalar uncertainty proxy
\begin{equation}
\widehat{\sigma}^2_{\xi,k}\triangleq \mathrm{tr}\!\big(\boldsymbol\Sigma_{\xi,k}\big),
\label{eq:sigma_xi_scalar}
\end{equation}
which summarizes the total detuning variance along the two cosine-domain. 
These error measures will be used to impose robustness constraints in the problem formulation below.

\subsection{Problem Formulation}\label{subsec:pf}

We focus on a snapshot at the target-slot $\tau$.
When serving all users is infeasible under the instantaneous power budget $P_{\max}$, we introduce binary admission indicators $\alpha_k\in\{0,1\}$, where $\alpha_k=1$ indicates that user $k$ is admitted and must satisfy its QoS target.
The QoS admission ratio (QAR) is defined as
\begin{equation}
\mathrm{QAR}\triangleq \frac{1}{K}\sum_{k=1}^{K} \alpha_k .
\label{eq:QAR_def}
\end{equation}

\subsubsection{Robustness modeling and offline-calibrated certificates} Such pointing errors can reduce the main-lobe gain of narrow UPA beams.
We describe $\Delta\boldsymbol\omega_{\tau}$ using an offline-calibrated error model and use it to derive tractable robustness constraints.
We consider a bounded-error model and a moment-based model.
We introduce a design tolerance $\epsilon>0$ that specifies the maximum tolerable normalized main-lobe gain loss due to pointing errors.

\textit{Assumption A1 (Offline-calibrated target-window pointing bound).}
There exist a confidence level $1-\rho$ with $\rho\in(0,1)$ and a radius $\delta_\omega>0$ such that, for each {forecast time} $t$,
\begin{equation}
\mathbb{P}\!\left(
\max_{h\in\{d{+}1,\ldots,H_{\mathrm{pred}}\}}
\big\|\Delta\boldsymbol\omega_{t+h|t}\big\|_2
\le \delta_\omega
\right)\ge 1-\rho,
\label{eq:A1_window}
\end{equation}
where $\Delta\boldsymbol\omega_{t+h|t}$ is the pointing mismatch induced by using the $h$-step-ahead {forecast} generated at time $t$,
\begin{align}
\Delta\boldsymbol\omega_{t+h|t}
\triangleq
\mathrm{vee}\!\Big(\log\big(\widehat{\mathbf{R}}_{t+h|t}^{\top}\mathbf{R}_{t+h}\big)\Big)\in\mathbb{R}^3,\notag\\
\qquad h\in\{d{+}1,\ldots,H_{\mathrm{pred}}\}.
\label{eq:delta_omega_ts}
\end{align}

The radius $\delta_\omega$ is obtained offline on an independent validation/calibration split by taking an empirical $(1-\rho)$-quantile of the target-window maxima, following standard forecasting practice and recent conformal methods for multi-step time-series forecasting \cite{SOUSA2024128434}.

\textit{Assumption A2 (Effective angular region).} 
Let $(\vartheta_k,\varphi_k)$ denote the {forecast} body-frame steering angles used to form the analog beam in the target-slot, i.e.,
$(\vartheta_k,\varphi_k)\triangleq(\widetilde{\vartheta}_{k,\tau|t^\star(\tau)},\widetilde{\varphi}_{k,\tau|t^\star(\tau)})$.
For each user $k$, there exists an angular region $\mathcal{S}_k$ such that
\begin{equation}
\mathbb{P}\!\big((\vartheta_k,\varphi_k)\in\mathcal{S}_k\big)\ge 1-\rho_s,
\label{eq:A2}
\end{equation}
where $\rho_s\in(0,1)$ is a small confidence parameter.
The region $\mathcal{S}_k$ is specified offline from the service geometry and validation data, consistent with sector-based descriptions used in aerial beamforming~\cite{10436051}.

For small detuning, the main-lobe gain loss can be approximated by the quadratic model in \eqref{eq:loss_quad}.
Combining \eqref{eq:Jxi_closed} and \eqref{eq:loss_quad}, we can write
\begin{align}
&\Delta G_{\mathrm{UPA},k}
\approx
\Delta\boldsymbol\omega_{\tau}^{\top}\,
\mathbf{Q}_k(\vartheta_k,\varphi_k)\,
\Delta\boldsymbol\omega_{\tau},\notag\\
\qquad&
\mathbf{Q}_k(\vartheta,\varphi)
\triangleq
\mathbf{J}_{\xi,k}(\vartheta,\varphi)^{\top}
\mathrm{diag}(c_x,c_y)
\mathbf{J}_{\xi,k}(\vartheta,\varphi).
\label{eq:Qk_form}
\end{align}

To obtain a conservative constant that is valid over the angular region $\mathcal{S}_k$, we take a worst-case spectral bound
\begin{equation}
L_k^2 \triangleq \max_{(\vartheta,\varphi)\in\mathcal{S}_k}\lambda_{\max}\!\big(\mathbf{Q}_k(\vartheta,\varphi)\big).
\label{eq:Lk_def_rev}
\end{equation}

\begin{lemma}\label{lem:TI_certificate}
\textbf{Offline-calibrated pointing-feasibility certificate.}
Under Assumptions~A1-A2, define $L_k^2$ by \eqref{eq:Lk_def_rev}.
On the event
$\left\{\max_{h\in\{d{+}1,\ldots,H_{\mathrm{pred}}\}}\|\Delta\boldsymbol\omega_{t+h|t}\|_2\le \delta_\omega\right\}
\cap \{(\vartheta_k,\varphi_k)\in\mathcal{S}_k\}$,
if $L_k^2\,\delta_\omega^2\le \epsilon$, then the quadratic approximation of the main-lobe gain loss satisfies
$\Delta G_{\mathrm{UPA},k}\le \epsilon$
for user $k$ over all target-slot horizons $h\in\{d{+}1,\ldots,H_{\mathrm{pred}}\}$ programmed at time $t$.
Consequently, the certificate holds with probability at least $1-(\rho+\rho_s)$.
\end{lemma}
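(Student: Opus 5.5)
The plan is a pointwise argument on the stated intersection event, followed by a union bound. Fix a forecast time $t$ and a user $k$, and work on the event
\[
\mathcal{E}\triangleq\Big\{\max_{h\in\{d+1,\ldots,H_{\mathrm{pred}}\}}\|\Delta\boldsymbol\omega_{t+h|t}\|_2\le\delta_\omega\Big\}\cap\big\{(\vartheta_k,\varphi_k)\in\mathcal{S}_k\big\}.
\]
Fix any horizon $h\in\{d+1,\ldots,H_{\mathrm{pred}}\}$. I would begin from the quadratic form \eqref{eq:Qk_form}, $\Delta G_{\mathrm{UPA},k}\approx\Delta\boldsymbol\omega^{\top}\mathbf{Q}_k(\vartheta_k,\varphi_k)\Delta\boldsymbol\omega$. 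Here $\Delta\boldsymbol\omega=\Delta\boldsymbol\omega_{t+h|t}$ as in \eqref{eq:delta_omega_ts}.

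First, note that $\mathbf{Q}_k=\mathbf{J}_{\xi,k}^{\top}\mathrm{diag}(c_x,c_y)\mathbf{J}_{\xi,k}$ is symmetric positive semidefinite, since $c_x,c_y\ge 0$ whenever $M_x,M_y\ge1$. The Rayleigh quotient bound then gives
\[
\Delta\boldsymbol\omega^{\top}\mathbf{Q}_k\Delta\boldsymbol\omega\le\lambda_{\max}\big(\mathbf{Q}_k(\vartheta_k,\varphi_k)\big)\|\Delta\boldsymbol\omega\|_2^2 .
\]
Second, on $\mathcal{E}$ the steering pair lies in $\mathcal{S}_k$. By the definition \eqref{eq:Lk_def_rev}, this means $\lambda_{\max}(\mathbf{Q}_k(\vartheta_k,\varphi_k))\le L_k^2$. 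Third, on $\mathcal{E}$ we also have $\|\Delta\boldsymbol\omega_{t+h|t}\|_2\le\delta_\omega$ for every $h$ in the window. Chaining these three facts gives $\Delta G_{\mathrm{UPA},k}\le L_k^2\delta_\omega^2\le\epsilon$ under the hypothesis.

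The bound holds simultaneously for all $h$, because A1 controls the window maximum rather than a single horizon. One detail needs care: the steering angles $(\vartheta_k,\varphi_k)$ depend on the target slot. I would read A2 as applying to the angles used in each programmed slot, so the membership in $\mathcal{S}_k$ covers every $h$ at once. I would state this interpretation explicitly, noting that A2 is posed for the target slot generically.

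For the probability claim, I would apply a union bound to the complements of the two events. By A1 the first complement has probability at most $\rho$, and by A2 the second has probability at most $\rho_s$. Hence $\mathbb{P}(\mathcal{E})\ge1-(\rho+\rho_s)$. The certificate holds on $\mathcal{E}$, so it holds with at least that probability.

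The main obstacle is the slot-dependence of the steering angles, not the algebra. The second event must be read as holding across the whole target window, or else $\mathcal{S}_k$ must be chosen to contain all forecast angles over the window with the stated confidence; otherwise the union bound would pick up extra terms. A smaller point is that the conclusion concerns the quadratic approximation of the gain loss, not the exact loss. I would state this clearly so that no higher-order remainder needs to be bounded.
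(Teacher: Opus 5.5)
Your proposal is correct and follows essentially the same route as the paper's proof. Both arguments use positive semidefiniteness of $\mathbf{Q}_k$ with the Rayleigh quotient bound, the definition of $L_k^2$ over $\mathcal{S}_k$, the A1 window bound applied uniformly in $h$, and a union bound on the two complements to obtain $1-(\rho+\rho_s)$. Your caveat about the slot dependence of $(\vartheta_k,\varphi_k)$ is a fair refinement, and the paper's proof implicitly adopts the same reading by treating $\{(\vartheta_k,\varphi_k)\in\mathcal{S}_k\}$ as a single event covering the whole target window.
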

\begin{proof}
    Please refer to Appendix 1 in the supplementary material.
\end{proof}

Lemma~\ref{lem:TI_certificate} provides an offline-calibrated pointing-feasibility certificate.
In particular, enforcing $L_k^2\delta_\omega^2\le \epsilon$ guarantees $\Delta G_{\mathrm{UPA},k}\le \epsilon$ on the calibrated target-window event, with probability at least $1-(\rho+\rho_s)$.
This motivates the sufficient robustness constraint in \eqref{eq:PF_TI_cert}, which introduces no additional online decision variables once $\mathbf{A}_\tau$ is fixed.

{In addition to the bounded-error certificate above, we also consider a moment-based condition.}
If $\Delta\boldsymbol\omega_{\tau}$ has mean $\boldsymbol\mu_{\omega}$ and covariance $\boldsymbol\Sigma_{\omega}$, then
\begin{equation}
\mathbb{E}\!\left[\Delta G_{\mathrm{UPA},k}\right]
=\boldsymbol\mu_{\omega}^{\top}\mathbf{Q}_k(\vartheta_k,\varphi_k)\boldsymbol\mu_{\omega}
+\mathrm{tr}\!\left(\mathbf{Q}_k(\vartheta_k,\varphi_k)\boldsymbol\Sigma_{\omega}\right)
\le \epsilon,
\label{eq:stat_cert}
\end{equation}
which follows from standard second-moment analysis~\cite{11048751}.

\subsubsection{QoS-driven two-stage optimization}

We adopt a QoS-driven policy. 
We first maximize QAR and then refine the EE on the admitted set.
Combining QoS, power, the analog constant-modulus structure, and the above certificate, the snapshot constraints are written as
\begin{subequations}\label{eq:PF_TI}
\begin{align}
& R_k(\mathbf{A}_\tau,\mathbf{D}) \ \ge\ \alpha_k\, r_k^{\min}, \ \forall k, \label{eq:PF_TI_qos}\\
& \|\mathbf{A}_\tau\mathbf{D}\|_F^2 \ \le\ P_{\max}, \label{eq:PF_TI_pow}\\
& \mathbf{A}_\tau\in\mathcal{A}^{\mathrm{pred}}
=\Big\{\mathbf{A}_\tau\in\mathbb{C}^{M\times N_{\mathrm{RF}}}:|[\mathbf{A}_\tau]_{p,q}|=\tfrac{1}{\sqrt{M}}, \notag\\
&\quad\quad\quad\quad\quad\quad\ \forall p=1,\ldots,M,\ q=1,\ldots,N_{\mathrm{RF}}\Big\},
\label{eq:PF_TI_CM}
\\
& \alpha_k\, L_k^2\,\delta_\omega^2 \ \le\ \alpha_k\,\epsilon,\ \forall k. \label{eq:PF_TI_cert}
\end{align}
\end{subequations}

Here, $\boldsymbol{\alpha}=[\alpha_1,\ldots,\alpha_K]^{\top}$ collects the binary admission indicators.
The analog beamformer $\mathbf{A}_\tau$ is generated from the {HAP attitude} via \eqref{eq:pred_body_dir}-\eqref{eq:pred_analog_beamformer_matrix} and therefore satisfies the constant-modulus constraint \eqref{eq:PF_TI_CM} by construction.
The certificate parameter $\delta_\omega$ is calibrated offline from {forecast errors} over the target-slot horizons $h\in\{d{+}1,\ldots,H_{\mathrm{pred}}\}$.

Given the analog beamformer $\mathbf{A}_\tau$ fixed for the target-slot $\tau$, we adopt a QoS-driven two-stage formulation.
Stage I maximizes QAR and is formulated as
\begin{subequations}\label{prob:QAR_static}
\begin{align}
\underset{\mathbf{D},\,\boldsymbol{\alpha}}{\text{maximize}}\quad &
\frac{1}{K}\sum_{k=1}^{K} \alpha_k \label{prob:QAR_static:a}\\
\text{subject to}\quad
& \eqref{eq:PF_TI_qos},\ \eqref{eq:PF_TI_pow},\ \eqref{eq:PF_TI_cert}, \label{prob:QAR_static:b}\\
& \alpha_k\in\{0,1\},\ \forall k. \label{prob:QAR_static:e}
\end{align}
\end{subequations}

Let $\boldsymbol{\alpha}^{\star}$ be an optimal admission decision of \eqref{prob:QAR_static}.
Stage II maximizes EE over the admitted set and is formulated as
\begin{subequations}\label{prob:EE_given_adm}
\begin{align}
\underset{\mathbf{D}}{\text{maximize}}\quad &
\frac{\sum_{k=1}^{K} \alpha_k^{\star}\,R_k(\mathbf{A}_\tau,\mathbf{D})}{\|\mathbf{A}_\tau\mathbf{D}\|_F^2+P_c}
\label{prob:EE_given_adm:a}\\
\text{subject to}\quad
& R_k(\mathbf{A}_\tau,\mathbf{D}) \ \ge\ \alpha_k^{\star}\, r_k^{\min},\ \forall k, \label{prob:EE_given_adm:b}\\
& \eqref{eq:PF_TI_pow}. 
\label{prob:EE_given_adm:c}
\end{align}
\end{subequations}

The resulting problems are non-convex and include binary admission variables.
Even with fixed $\mathbf{A}_\tau$, the SINR/rate constraints are non-convex due to multiuser interference coupling in \eqref{eq:sinr_def}.
Moreover, Stage I introduces binary admissions, leading to a mixed-integer non-convex program (MINCP), while Stage II features a fractional EE objective under instantaneous QoS and power constraints.
Such coupled formulations are widely recognized as challenging to solve at scale under stringent real-time processing constraints \cite{9839601,10517628}.
Beyond computational hardness, the analog beamformer must be set in advance based on the {forecast attitudes} over the horizon, whereas the digital beamformer must be computed online using instantaneous CSI.
Meanwhile, attitude forecast errors introduce pointing uncertainty, and robustness-aware design is needed without incurring high online complexity.
These gaps motivate the learning-assisted reconstruction and strict feasibility repair developed in Sec.~\ref{sec:algorithm}, while Sec.~\ref{sec:VL-LLM} specifies how VL-LLM produces $\{\mathbf{A}_{t+h|t}\}_{h=d+1}^{H_{\mathrm{pred}}}$ and how {forecast errors} over the target-slot window calibrate the certificate parameter used above.

\section{VL-LLM for {Forecast-Aided} Analog Beamforming}\label{sec:VL-LLM}

Recent time-series {forecasting} methods for multivariate HAP flight {telemetry} can be broadly categorized into vision-based methods \cite{wang2025timemixer,wu2023timesnet} and LLM-based methods \cite{10892257,jin2024timellm,10.1145/3589334.3645434}.
However, LLM-based methods face a modality gap between continuous-valued time series and discrete tokens, and they are not explicitly pre-trained for fine-grained temporal patterns, which can limit forecasting accuracy.
Vision-based methods often provide limited semantic interpretability, which makes it difficult to incorporate domain knowledge or task-specific priors.
These limitations motivate a vision-language {forecasting model} that combines visual feature extraction with LLM sequence modeling to improve short-term attitude forecasting for proactive analog beamforming under decision delay.
{To address these limitations, we combine vision-based feature extraction with prompt-based LLM conditioning.
The vision module renders multivariate flight telemetry as visual representations so that a vision backbone can capture local dynamics, cross-variable coupling, and periodic structures.
The language module acts as a prompt-conditioned sequence forecaster that injects domain-aware task priors into multi-step attitude forecasting.}

\begin{figure}[!t]
	\centering
	\includegraphics[width=\linewidth]{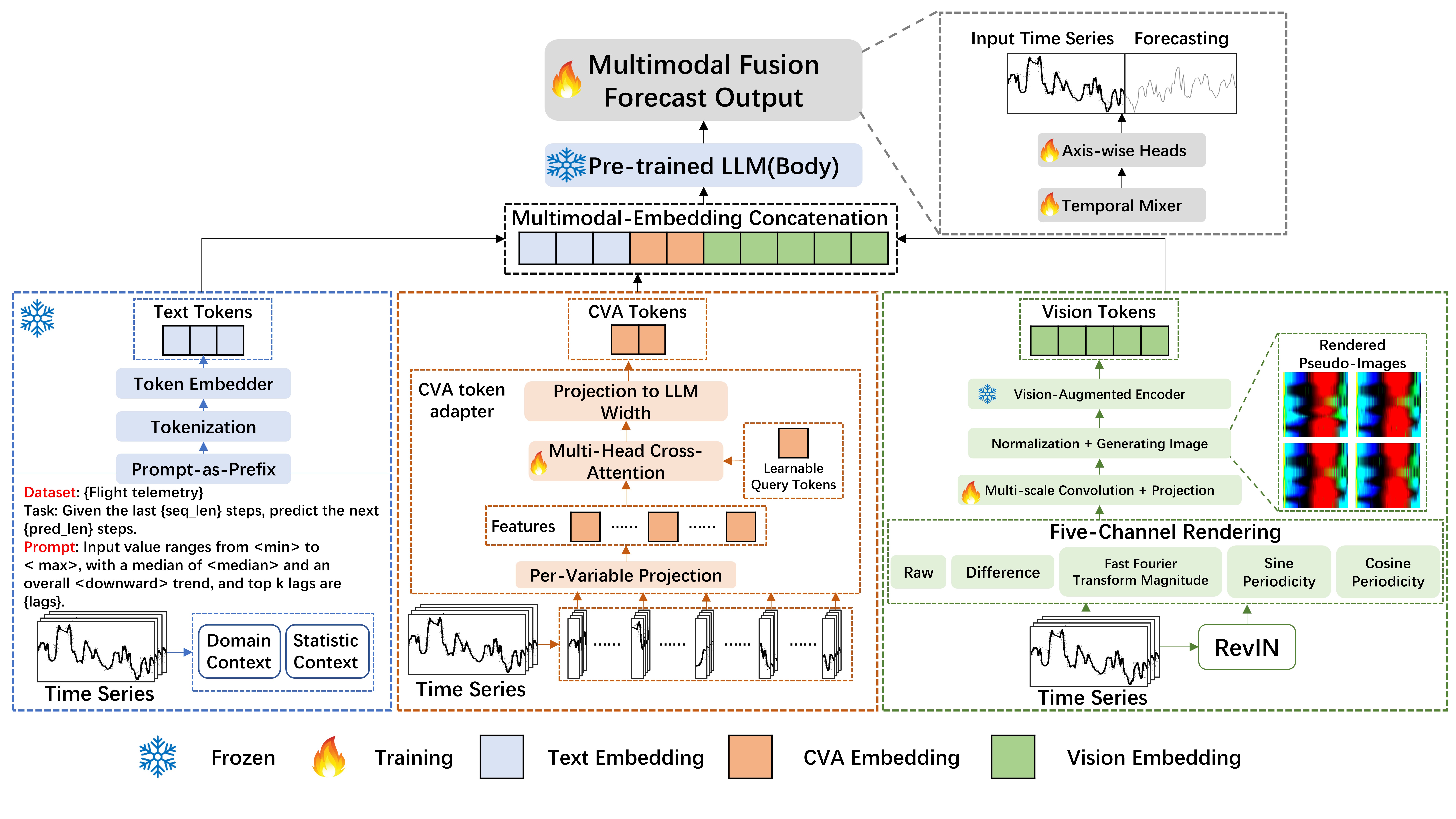}
	\caption{Overall VL-LLM method for {forecast-aided} analog beamforming in HAP downlinks.}
	\label{fig:VL-LLM}
\end{figure}

{As shown in Fig.~\ref{fig:VL-LLM}, the input to VL-LLM is a length-$L_{\mathrm{win}}$ multivariate telemetry window $\mat{X}_t\in\mathbb{R}^{N_{\mathrm{var}}\times L_{\mathrm{win}}}$. 
The model contains three complementary modules. 
First, a telemetry-to-vision module converts $\mat{X}_t$ into vision tokens for capturing local temporal dynamics, periodic structures, and frequency-sensitive patterns. 
Second, a small set of learned cross-variable adapter (CVA) tokens summarizes inter-channel coupling and provides a compact token-level representation of cross-variable dependencies. 
Third, a text module injects task instructions and lightweight window statistics into the frozen LLM. 
The resulting multimodal token sequence is then fed into a frozen LLM with lightweight adaptation.
This frozen LLM acts as a prompt-conditioned forecaster and outputs an $H_{\mathrm{pred}}$-step forecast of yaw, pitch, and roll. }
Due to the decision delay $d$, only the target-slot horizons $h\in\{d{+}1,\ldots,H_{\mathrm{pred}}\}$ are used to form the analog beamformer sequence for the target-slot set $\mathcal{T}_t$.
In each target-slot $\tau\in\mathcal{T}_t$, the digital beamformer is still optimized online from the instantaneous channel estimate.

\subsection{{Telemetry-to-Vision Module}}
{The vision module renders multivariate flight telemetry into a pseudo-image so that local transitions, cross-channel structures, and frequency-sensitive patterns can be encoded by a vision backbone.}
Given $\mat{X}_t\in\mathbb{R}^{N_{\mathrm{var}}\times L_{\mathrm{win}}}$, where $N_{\mathrm{var}}$ is the number of variables and $L_{\mathrm{win}}$ is the look-back window length, we apply channel-wise reversible instance normalization (RevIN) \cite{kim2022reversible} as
\begin{equation}
\widetilde{\mat{X}}_t=\mathrm{RevIN}(\mat{X}_t).
\label{eq:revin}
\end{equation}

We form first-order differences along the time axis.
We set $\Delta\mat{X}_t[:,1]=\mathbf{0}$ and define $\Delta\mat{X}_t[:,\ell]=\widetilde{\mat{X}}_t[:,\ell]-\widetilde{\mat{X}}_t[:,\ell-1]$ for $\ell=2,\ldots,L_{\mathrm{win}}$.
We compute the real fast Fourier transform (FFT) magnitude of each variable along the time axis and interpolate it to length $L_{\mathrm{win}}$.
\begin{equation}
\label{eq:fft-mag}
\overline{\mathcal{F}}(\widetilde{\mat{X}}_t)
=\mathrm{Interp}_{\to L_{\mathrm{win}}}\!\big(|\mathrm{rFFT}(\widetilde{\mat{X}}_t)|\big).
\end{equation}
where $\mathrm{rFFT}(\cdot)$ denotes the one-sided real FFT along the time axis,
and $\mathrm{Interp}_{\to L_{\mathrm{win}}}(\cdot)$ denotes interpolation that resamples the one-sided magnitude spectrum to length $L_{\mathrm{win}}$ for consistent stacking.

We form a multivariate tensor by concatenating raw values, first-order differences, FFT magnitudes, and sinusoidal codes along the channel axis.
{This process is} gated by $s_{\mathrm{raw}},s_{\Delta},s_{\mathrm{fft}},s_{\mathrm{per}}\in\{0,1\}$ {and is expressed} as
\begin{align}
\label{eq:vam-tensor}
\mathcal{X}^{\mathrm{vis}}_t
=&\mathrm{concat}_c\!\Big(
s_{\mathrm{raw}}\widetilde{\mat{X}}_t,\;
s_{\Delta}\Delta\mat{X}_t,\;\notag\\
&s_{\mathrm{fft}}\overline{\mathcal{F}}(\widetilde{\mat{X}}_t),\;
s_{\mathrm{per}}\widetilde{\mat{S}}_t,\;
s_{\mathrm{per}}\widetilde{\mat{C}}_t
\Big)
\in\mathbb{R}^{C_{\mathrm{in}}\times N_{\mathrm{var}}\times L_{\mathrm{win}}},
\end{align}
where $\mathrm{concat}_c(\cdot)$ denotes concatenation along the channel dimension and  $C_{\mathrm{in}}=s_{\mathrm{raw}}+s_{\Delta}+s_{\mathrm{fft}}+2s_{\mathrm{per}}$,
{$\widetilde{\mat{S}}_t$ and $\widetilde{\mat{C}}_t$ represent adaptive per-variable sinusoidal codes.}
{Raw values preserve short-term motion states, while first-order differences emphasize local transitions. 
FFT magnitudes summarize dominant frequency content, and sinusoidal codes provide lightweight periodicity cues.}
{These components are stacked to form a compact multi-channel visual tensor that integrates temporal, spectral, and periodic information.}
We convert $\mathcal{X}^{\mathrm{vis}}_t$ to a pseudo-RGB image using a lightweight convolutional stem {which} maps $C_{\mathrm{in}}$ channels to three channels for compatibility with image backbones.
We then resize the image to $(H_0,W_0)$ and apply the standard preprocessing required by the selected vision backbone.

\subsection{Cross-Variable Adapter Module}
\label{subsec:cva}
{While the telemetry-to-vision module is effective at extracting rendered local patterns, it does not explicitly provide a compact token-level summary of cross-variable coupling. To supplement this information, we introduce a lightweight CVA that aggregates inter-channel dependencies into a small set of learnable adapter tokens before the frozen LLM.}
{This module utilizes a small set of learnable query tokens and multi-head cross-attention over per-variable features to output a fixed number of adapter tokens.}
Given $\mat{X}_t\in\mathbb{R}^{N_{\mathrm{var}}\times L_{\mathrm{win}}}$, we take the most recent $L_{\mathrm{tail}}$ steps ($L_{\mathrm{tail}}\le L_{\mathrm{win}}$) and form per-variable tail patches $\widetilde{\mat{X}}^{\mathrm{tail}}_t[i,:]\in\mathbb{R}^{L_{\mathrm{tail}}}$ for $i=1,\ldots,N_{\mathrm{var}}$.
Each variable patch is projected to a latent width $d_{\mathrm{cva}}$ as
\begin{equation}
\label{eq:cva-var-proj}
\vect{v}_i = \mat{W}_{\mathrm{var}}\,\widetilde{\mat{X}}^{\mathrm{tail}}_t[i,:] + \vect{b}_{\mathrm{var}}
\in\mathbb{R}^{d_{\mathrm{cva}}},\quad i=1,\ldots,N_{\mathrm{var}}.
\end{equation}
where $\mat{W}_{\mathrm{var}}\in\mathbb{R}^{d_{\mathrm{cva}}\times L_{\mathrm{tail}}}$ and $\vect{b}_{\mathrm{var}}\in\mathbb{R}^{d_{\mathrm{cva}}}$ are shared learnable projection parameters.
We stack $\mat{V}_{\mathrm{var}}=[\vect{v}_1;\ldots;\vect{v}_{N_{\mathrm{var}}}]$ and use $N_{\mathrm{cv}}$ learnable queries to attend over variables via multi-head cross-attention as
\begin{equation}
\label{eq:cva-crossattn}
\mat{H}_{\mathrm{cva}} = \mathrm{MHA}(\mat{Q},\mat{V}_{\mathrm{var}},\mat{V}_{\mathrm{var}})\in\mathbb{R}^{N_{\mathrm{cv}}\times d_{\mathrm{cva}}}.
\end{equation}
where $\mat{Q}\in\mathbb{R}^{N_{\mathrm{cv}}\times d_{\mathrm{cva}}}$ denotes learnable query tokens, $\mat{V}_{\mathrm{var}}$ is used as both keys and values. 
{By employing a fixed number of learnable query tokens $N_{\mathrm{cv}}$, the adapter output length is independent of $N_{\mathrm{var}}$, which effectively bounds the additional computation introduced by the cross-variable module.}

A lightweight residual block refines the adapter tokens as
\begin{equation}
\label{eq:cva-block}
\mat{U}=\mathrm{LN}(\mat{Q}+\mat{H}_{\mathrm{cva}}),\quad
\mat{U}\leftarrow \mathrm{LN}\big(\mat{U}+\mathrm{FFN}(\mat{U})\big),
\end{equation}
where $\mathrm{LN}(\cdot)$ denotes layer normalization applied along the feature dimension, and $\mathrm{FFN}(\cdot)$ denotes a position-wise feed-forward network that preserves the token length and feature dimension.
We finally project them to the LLM width as
\begin{equation}
\label{eq:cva-out}
\widehat{\mat{M}}=\mat{U}\mat{W}_{\mathrm{out}} \in\mathbb{R}^{N_{\mathrm{cv}}\times d_{\mathrm{LLM}}}.
\end{equation}
where $\mat{W}_{\mathrm{out}}\in\mathbb{R}^{d_{\mathrm{cva}}\times d_{\mathrm{LLM}}}$ is a learnable linear projection that maps the CVA latent width $d_{\mathrm{cva}}$ to the LLM embedding width $d_{\mathrm{LLM}}$.
{Consequently, CVA produces $N_{\mathrm{cv}}$ tokens that are appended to the LLM input sequence.
Acting as a compact bridge between raw telemetry and the frozen backbone, these tokens encode inter-variable interactions.}

\subsection{Text Module}
{Alongside the telemetry-to-vision and CVA modules, we use a text module to inject task priors and lightweight window-level summaries into the frozen LLM. 
This module provides structured task conditioning for short-term HAP attitude forecasting rather than free-form language generation.}

\begin{table}[h]
\centering
\caption{{Representative prompt structure for the text module}}
\label{tab:prompt_structure}
\begin{tabular}{ll}
\toprule
{\textbf{Component}} & {\textbf{Content}} \\
\midrule
{Role} & {Expert HAP attitude forecaster} \\
{Input} & {Last \{seq\_len\} time steps} \\
{Task} & {Forecast next \{pred\_len\} steps for proactive beam steering} \\
{Output format} & {\{enc\_in\} real values per step, channel-ordered} \\
{Guidance} & {Temporal consistency, trends, periodic patterns} \\
{Context} & {Window statistics: slope and dominant period per channel} \\
\bottomrule
\end{tabular}
\end{table}
{Table~\ref{tab:prompt_structure} presents a representative prompt structure used in the text module.
The conditioning comprises high-level task priors, structured output constraints for numerical forecasting, and local context descriptors derived from the input window.}
We summarize trend statistics using a per-channel least-squares slope over the look-back window.
Let $\widetilde{X}_{t}[i,\ell]$ denote the RevIN-normalized value of channel $i$ at index
$\ell\in\{1,\ldots,L_{\mathrm{win}}\}$, and let
$\bar{\ell}=\frac{1}{L_{\mathrm{win}}}\sum_{\ell=1}^{L_{\mathrm{win}}}\ell$ and
$\bar{X}_{i}=\frac{1}{L_{\mathrm{win}}}\sum_{\ell=1}^{L_{\mathrm{win}}}\widetilde{X}_{t}[i,\ell]$.
{To instantiate the prompt in a lightweight and reproducible manner, we summarize each channel by extracting specific trend and periodicity descriptors from the current look-back window.} 
We compute the slope as
\begin{equation}
\label{eq:pap-slope}
\kappa_i
=\frac{\sum_{\ell=1}^{L_{\mathrm{win}}}(\ell-\bar{\ell})\big(\widetilde{X}_{t}[i,\ell]-\bar{X}_{i}\big)}
{\sum_{\ell=1}^{L_{\mathrm{win}}}(\ell-\bar{\ell})^2+\varepsilon_{\kappa}},
\quad i=1,\ldots,N_{\mathrm{var}},
\end{equation}
{where the variable $\kappa_i$ approximates the signed temporal trend per channel} and $\varepsilon_{\kappa}>0$ is a small constant for numerical stability.

To capture periodicity, we estimate a dominant period from the peak of the one-sided FFT power spectrum while excluding the zero-frequency component. 
{This is achieved through the following formulations}
\begin{align}
\label{eq:pap-fft-peak}
\mathcal{F}_i[f]=\mathrm{rFFT}\big(\widetilde{X}_t[i,1{:}L_{\mathrm{win}}]\big)[f],\notag\\\quad
f_i^\star=\arg\max_{f\ge 1}|\mathcal{F}_i[f]|^2,\quad
\widehat{P}_i \approx \frac{L_{\mathrm{win}}}{f_i^\star},
\end{align}
where $\mathcal{F}_i[f]$ denotes the one-sided FFT coefficient at frequency-bin index $f$ for channel $i$.
We additionally include dispersion summaries as
\begin{equation}
\label{eq:pap-disp}
\varsigma_i=\sqrt{\frac{1}{L_{\mathrm{win}}}\sum_{\ell=1}^{L_{\mathrm{win}}}\big(\widetilde{\mat{X}}_{t}[i,\ell]-\bar{X}_{i}\big)^2 }.
\end{equation}

Let $\Omega$ denote fixed domain tokens (HAP/beam task), $\mathcal{I}$ the instruction tokens, 
and $\mathcal{P}_t$ the textualization of $\{\{\kappa_i\}_{i=1}^{N_{\mathrm{var}}},\{\widehat{P}_i\}_{i=1}^{N_{\mathrm{var}}},\{\bar{X}_i,\varsigma_i\}_{i=1}^{N_{\mathrm{var}}}\}$. 
{A tokenizer and embedding map $\mathrm{Tok}(\cdot)$ converts this structured information into vectors according to}
\begin{equation}
\label{eq:pap-text}
\mat{E}_{\text{text}} \;=\; \mathrm{Tok}\!\big([\Omega;\,\mathcal{I};\,\mathcal{P}_t]\big)\in\mathbb{R}^{T_{\text{text}}\times d_{\mathrm{LLM}}}.
\end{equation}

\subsection{Multimodal Fusion and Output Heads}
{The multimodal fusion stage combines task-conditioned text tokens, CVA tokens, and vision tokens into a unified sequence for direct multi-horizon attitude forecasting.
We form the multimodal input sequence to the frozen LLM as}
\begin{equation}
\label{eq:mm-concat}
\mat{Z}_t
=\big[\,\mathrm{Trunc}_{T_{\max}}(\mat{E}_{\text{text}})\;;\;
\widehat{\mat{M}}\;;\;
\mathrm{Trunc}_{N_{\max}}(\mat{V}_t)\,\big],
\end{equation}
where $\mathrm{Trunc}_{L}(\cdot)$ keeps the first $L$ tokens along the token dimension.
The resulting multimodal sequence $\mat{Z}_t$ is fed to the frozen LLM, and the output representations are further processed by a lightweight mixer and axis-wise heads for multi-horizon forecasting.
{This concatenation preserves the complementary roles of the three modules where the text tokens provide task priors, the CVA tokens summarize cross-channel dependencies, and the vision tokens encode rendered telemetry patterns.}

{The last-layer hidden states of the frozen LLM are given by}
\begin{equation}
\mat{H}^{(L)}_{\mathrm{LLM}}
=
\big[\,\mat{H}^{(L)}_{\mathrm{LLM,txt}};\mat{H}^{(L)}_{\mathrm{LLM,cont}}\,\big]
\in\mathbb{R}^{(T_{\mathrm{txt}}+N_{\mathrm{cont}})\times d_{\mathrm{LLM}}},
\end{equation}
where $\mat{H}^{(L)}_{\mathrm{LLM,txt}}$ corresponds to the retained text tokens with length $T_{\mathrm{txt}}$ after truncation,
and $\mat{H}^{(L)}_{\mathrm{LLM,cont}}$ collects the remaining content tokens, including the $N_{\mathrm{cv}}$ CVA tokens and the $N_{\mathrm{vis}}$ vision tokens. 
{A lightweight temporal mixer $\Pi(\cdot)$ models the content-token sequence into a compact context vector $\vect{g}$ for direct multi-step forecasting, formulated as}
\begin{equation}
\label{eq:fbop-pool-revised}
\vect{g}=\Pi\!\big(\mat{H}^{(L)}_{\mathrm{LLM,cont}}\big).
\end{equation}
\begin{align}
\label{eq:fbop-direct-angles-revised}
\bigl[\widehat{\psi}_{t+1:t+H_{\mathrm{pred}}|t},\, \widehat{\theta}_{t+1:t+H_{\mathrm{pred}}|t},\, \widehat{\phi}_{t+1:t+H_{\mathrm{pred}}|t}\bigr]
= \notag\\
\bigl[g_{\text{yaw}}(\Gamma(\vect{g})),\, g_{\text{pitch}}(\Gamma(\vect{g})),\, g_{\text{roll}}(\Gamma(\vect{g}))\bigr],
\end{align}
where each axis-wise head outputs an $H_{\mathrm{pred}}$-dimensional sequence, and $\Gamma(\cdot)$ denotes an axis router/gating module.
Each head $g_{\mathrm{yaw}},g_{\mathrm{pitch}},g_{\mathrm{roll}}$ maps the context vector to an $H_{\mathrm{pred}}$-dimensional sequence, a direct multi-horizon forecasting that avoids error accumulation across steps.

The {forecast} attitudes are converted into the analog beamformer sequence {for the target-slot window} according to the geometric and steering relations in Section~\ref{subsec:sys_model}, while the digital beamformer is refined online using the instantaneous effective channel.

\subsection{Training and Offline Calibration}
{The forecasting model is trained for direct multi-horizon forecasting so that all future forecast steps are produced in one shot, which is consistent with the delay-aware target-slot beamformer framework.}
Given a length-$L_{\mathrm{win}}$ history ending at $t$, the model {forecasts} an $H_{\mathrm{pred}}$-step attitude sequence
\begin{equation}
\widehat{\mathbf{a}}_{t+s|t}\triangleq
(\widehat{\psi}_{t+s|t},\widehat{\theta}_{t+s|t},\widehat{\phi}_{t+s|t}),
\quad s=1,\ldots,H_{\mathrm{pred}}.
\end{equation}

Let $\mathbf{Y}_{t+1:t+H_{\mathrm{pred}}}\in\mathbb{R}^{H_{\mathrm{pred}}\times C}$ denote the ground-truth attitude sequence and
$\widehat{\mathbf{Y}}_{t+1:t+H_{\mathrm{pred}}}\in\mathbb{R}^{H_{\mathrm{pred}}\times C}$ denote the corresponding {forecast sequence}, where $C$ is the number of attitude channels.
The dataset standardizes the targets using scalers fitted on the training split, and we invert the scaling inside the loss so that errors are measured in the original physical units.
We minimize a weighted Huber regression loss
\begin{equation}
\mathcal{L}_{\mathrm{reg}}
=\frac{\sum_{s=1}^{H_{\mathrm{pred}}}\sum_{c=1}^{C} w_s\,w_c\,
\HuberFunc{\Delta_{s,c}}{\delta}}
{\sum_{s=1}^{H_{\mathrm{pred}}}\sum_{c=1}^{C} w_s\,w_c},
\end{equation}
where $\delta$ is the Huber threshold, $w_c$ are axis/channel weights, and $w_s$ are horizon weights.
The per-step error is
$\Delta_{s,c}=\widehat{\mathbf{Y}}_{t+1:t+H_{\mathrm{pred}}}[s,c]-\mathbf{Y}_{t+1:t+H_{\mathrm{pred}}}[s,c]$.
For yaw, we {use the sine-cosine representation} $(Y_{\sin},Y_{\cos})$ and evaluate the angular error via $\mathcal{L}_{\mathrm{yaw}}$ using $\mathrm{wrap}_{\pi}(\cdot)$ below.

We further add smoothness and geometry-aware terms and define the overall {forecasting} loss as
\begin{equation}
\mathcal{L}_{\mathrm{pred}}
=\mathcal{L}_{\mathrm{reg}}
+\lambda_{\mathrm{vel}}\mathcal{L}_{\mathrm{vel}}
+\lambda_{\mathrm{acc}}\mathcal{L}_{\mathrm{acc}}
+\lambda_{\mathrm{yaw}}\mathcal{L}_{\mathrm{yaw}}
+\lambda_{\circ}\mathcal{L}_{\circ},
\end{equation}
where $\lambda_{\mathrm{vel}},\lambda_{\mathrm{acc}},\lambda_{\mathrm{yaw}},\lambda_{\circ}\ge 0$ are loss weights.
Define the channel-wise error $e_{s,c}\triangleq \widehat{Y}_{s,c}-Y_{s,c}$ with
$\widehat{Y}_{s,c}\triangleq \widehat{\mathbf{Y}}_{t+1:t+H_{\mathrm{pred}}}[s,c]$ and
$Y_{s,c}\triangleq \mathbf{Y}_{t+1:t+H_{\mathrm{pred}}}[s,c]$.
Let $\Delta e_{s,c}\triangleq e_{s,c}-e_{s-1,c}$ ($s\ge2$) and
$\Delta^2 e_{s,c}\triangleq \Delta e_{s,c}-\Delta e_{s-1,c}$ ($s\ge3$). Then
\begin{subequations}\label{eq:smooth_yaw_losses}
\begin{align}
\mathcal{L}_{\mathrm{vel}}
&=\frac{1}{(H_{\mathrm{pred}}-1)C}\sum_{s=2}^{H_{\mathrm{pred}}}\sum_{c=1}^{C}
\HuberFunc{\Delta e_{s,c}}{\delta}, \label{eq:Lvel_short}\\
\mathcal{L}_{\mathrm{acc}}
&=\frac{1}{(H_{\mathrm{pred}}-2)C}\sum_{s=3}^{H_{\mathrm{pred}}}\sum_{c=1}^{C}
\HuberFunc{\Delta^2 e_{s,c}}{\delta}. \label{eq:Lacc_short}
\end{align}
\end{subequations}

For yaw, we compute $\psi_{t+s}=\atanTwo(Y_{\sin,t+s},Y_{\cos,t+s})$ and
$\widehat{\psi}_{t+s|t}=\atanTwo(\widehat{Y}_{\sin,t+s|t},\widehat{Y}_{\cos,t+s|t})$ and set
\begin{equation}\label{eq:Lyaw_short}
\mathcal{L}_{\mathrm{yaw}}
=\frac{1}{H_{\mathrm{pred}}}\sum_{s=1}^{H_{\mathrm{pred}}}
\HuberFunc{\mathrm{wrap}_{\pi}\!\big(\widehat{\psi}_{t+s|t}-\psi_{t+s}\big)}{\delta},
\end{equation}
where $\mathrm{wrap}_{\pi}(\cdot)$ maps angles to $(-\pi,\pi]$.
Finally, $\mathcal{L}_{\circ}$ enforces $(\widehat{Y}_{\sin})^2+(\widehat{Y}_{\cos})^2\approx 1$.
The LLM backbone is kept frozen, and only lightweight task adapters are trained.

The certificates in Lemma~\ref{lem:TI_certificate} and \eqref{eq:stat_cert} require an uncertainty description of the target-slot pointing mismatch $\Delta\boldsymbol\omega_{\tau}$ defined in \eqref{eq:delta_omega_def}.
For each {forecast time} $t$ and horizon $h$, the corresponding future slot is $\tau=t+h$.
We perform a one-time offline calibration on a separate validation set by collecting $\Delta\boldsymbol\omega_{\tau}$ over the target-slot horizons $h\in\{d{+}1,\ldots,H_{\mathrm{pred}}\}$, which are the horizons used to program the analog beamformer sequence over $\mathcal{T}_t$.
This offline calibration yields fixed uncertainty parameters that are reused during online operation and do not add online computation.
The use of a separate calibration set and empirical quantiles follows standard forecasting practice \cite{NEURIPS2019_5103c358} and is related to recent conformal methods for multi-step time-series forecasting \cite{SOUSA2024128434}.

\emph{Deterministic calibration.}
On an independent calibration split, we slide the same windowing protocol and, for each {forecast time} $t$, compute the target-window maximum
\begin{equation}
Z_t \triangleq
\max_{h\in\{d{+}1,\ldots,H_{\mathrm{pred}}\}}
\big\|\Delta\boldsymbol\omega_{t+h|t}\big\|_2,
\label{eq:calib_Zt}
\end{equation}

We then set
\begin{equation}
\delta_\omega \triangleq \operatorname{Quantile}_{1-\rho}\!\big(\{Z_t\}\big),
\label{eq:delta_omega_quantile}
\end{equation}
where $\{Z_t\}$ collects the target-window maxima computed over all {forecast time}s $t$ in the calibration split. 
{This calibrated radius directly interfaces with Lemma~\ref{lem:TI_certificate}, which demonstrates that the analog-beam main-lobe gain loss is uniformly bounded over all target-slot horizons by $L_k^2\delta_\omega^2$ for user $k$ under the small-detuning approximation and the target-window event. 
Therefore, choosing $\delta_\omega$ through the empirical $(1-\rho)$-quantile provides a direct way to translate observed target-window {forecast errors} into a high-confidence analog-pointing robustness margin. 
A larger confidence level $(1-\rho)$ yields a larger calibrated radius $\delta_\omega$ and leads to a more conservative but more robust analog steering certificate, an effect that is later quantified in the calibration-sensitivity experiments.
Together with the steering-angle event in Assumption A2, the resulting target-window guarantee holds with probability at least $1-(\rho+\rho_s)$, as detailed in the proof of Lemma~\ref{lem:TI_certificate} in the supplementary material.}

\emph{Statistical calibration:}
We estimate $(\boldsymbol\mu_\omega,\boldsymbol\Sigma_\omega)$ as the sample mean and covariance of $\{\Delta\boldsymbol\omega_{t+h}\}_{h=d+1}^{H_{\mathrm{pred}}}$ on the validation set and use them in \eqref{eq:stat_cert}. 
Both $\delta_\omega$ and $(\boldsymbol\mu_\omega,\boldsymbol\Sigma_\omega)$ are computed offline and then treated as fixed parameters during online operation.
In contrast, $\mathbf{J}_{\xi,k}$ in \eqref{eq:Jxi_closed} is evaluated online at the operating point associated with the analog beamformer applied in the corresponding target-slot.
{These offline estimates provide the uncertainty descriptors required by the deterministic and statistical certificates in the subsequent robust beamforming formulation without introducing additional online calibration cost.}

\section{Learning-Assisted QoS-driven Downlink Beamforming}
\label{sec:algorithm}

With proactive analog beam steering in Sec.~\ref{sec:VL-LLM}, the analog beamformer used in the target-slot $\tau$ is set as $\mathbf{A}_\tau\triangleq \mathbf{A}_{\tau|t^\star(\tau)}$ and treated as fixed during online operation.
The remaining online task is QoS-driven downlink digital beamforming with admission control under instantaneous QoS and transmit-power constraints.
This task is challenging because the digital beamforming and admission decisions must be computed online within each slot.
Recent model-based iterative solvers, such as the WMMSE and SCA, can be accurate but typically require multiple updates and repeated numerical solves \cite{10977774,11130524,11010920,ZHANG2026156209,ZHU2026103001}.
Learning-based {forecaster} are fast, yet they may output infeasible admissions and beamformers without explicit safeguards \cite{10333602,11148833}.
To bridge this gap, we develop a learning-assisted framework that produces per-slot digital beamforming and admission decisions with strict feasibility guarantees.
Robustness against {forecast-induced} pointing errors is handled offline through calibrated analog-pointing certificates, while the online stage focuses on satisfying instantaneous QoS and transmit-power constraints via a lightweight repair routine.

We adopt the bound-based pointing certificate and use it as a deterministic robustness certificate.
In each target-slot, $\mathbf{A}_\tau\in\mathcal{A}^{\mathrm{pred}}$ is produced by VL-LLM from the {HAP attitude forecasts}, and the corresponding pointing loss is controlled by the offline-calibrated parameters in Lemma~\ref{lem:TI_certificate}.
Accordingly, this section focuses on the remaining online variables $(\boldsymbol{\alpha},\mathbf{D})$ given $\mathbf{A}_\tau$.
{
Our goal is to avoid running a full iterative solver in every slot.
To maintain feasibility without high complexity, we employ a lightweight repair step.}

\subsection{Target-Slot QoS-driven Beamforming Formulation}
\label{subsec:routeA_overview}

Consider a snapshot at a target-slot $\tau$ and omit the slot index for brevity.
When serving all users is infeasible under $P_{\max}$, we introduce admission variables $\alpha_k\in\{0,1\}$.
For fixed $\mathbf{A}_\tau$, define the per-user QoS violation
\begin{equation}
v_k(\mathbf{D},\alpha_k) \triangleq \alpha_k r_k^{\min}-R_k(\mathbf{A}_\tau,\mathbf{D}), \qquad k\in\mathcal{K}.
\label{eq:qos_violation_def}
\end{equation}

The QAR objective is $\frac{1}{K}\sum_{k=1}^{K}\alpha_k$.
Feasibility additionally requires $\|\mathbf{A}_\tau\mathbf{D}\|_F^2\le P_{\max}$.
Once $\mathbf{A}_\tau$ is fixed, the pointing certificate is enforced through offline-calibrated constants and a simple pre-screening rule, and it does not introduce any additional online optimization variables.
For efficient evaluation, define $\mathbf{H}_{\mathrm{eff}}\triangleq \mathbf{H}^{\mathrm H}\mathbf{A}_\tau\in\mathbb{C}^{K\times N_{\mathrm{RF}}}$ and
\begin{equation}
\mathbf{G}\triangleq \mathbf{H}_{\mathrm{eff}}\mathbf{D}\in\mathbb{C}^{K\times K},\qquad
G_{k,j}=\mathbf{h}_k^{\mathrm H}\mathbf{A}_\tau\mathbf{d}_j,
\label{eq:G_def}
\end{equation}
so that $\mathrm{SINR}_k=\frac{|G_{k,k}|^2}{\sum_{j\neq k}|G_{k,j}|^2+\sigma^2}$ and $R_k=B\log_2(1+\mathrm{SINR}_k)$.
Let $\mathbf{h}_{\mathrm{eff},k}\triangleq (\mathbf{H}_{\mathrm{eff}})_{k,:}^{\mathrm H}\in\mathbb{C}^{N_{\mathrm{RF}}}$ denote the effective channel vector for user $k$.

\subsection{Solver-Guided Labels and KKT Targets}
\label{subsec:teacher_impl_aligned}

We design an offline {reference solver} to generate supervision by computing feasible high-quality solutions to \eqref{prob:QAR_static} and \eqref{prob:EE_given_adm} using a lightweight iterative solver. 
The {offline reference solver} follows a two-stage routine and stores the resulting admission decisions, beamformers, and auxiliary variables as distillation labels. 
Since $L_k$ and $\delta_\omega$ are fixed constants calibrated offline, the pointing certificate can be enforced by a simple pre-screening rule. 
Define the certified user set
\begin{equation}
\mathcal{K}_{\mathrm{cert}} \triangleq \left\{k\in\mathcal{K}: L_k^2\,\delta_\omega^2 \le \epsilon \right\}.
\end{equation}

Only users in $\mathcal{K}_{\mathrm{cert}}$ are eligible for admission in the target-slot optimization, and we set $\alpha_k=0$ for all $k\notin\mathcal{K}_{\mathrm{cert}}$.
Accordingly, the admission and strict-repair procedures operate on $\mathcal{K}_{\mathrm{cert}}$ without introducing additional online variables.

\textbf{Stage I. Feasibility-driven admission.}
Starting from the certified candidate set $\mathcal{K}_{\mathrm{cand}}=\mathcal{K}_{\mathrm{cert}}$, the {offline reference solver} iteratively removes users until the snapshot becomes feasible under $P_{\max}$.
To decide which user to be removed, the {offline reference solver} uses a low-cost required-power proxy $\pi_k$ computed from the snapshot.
For reproducibility and low online complexity, we define $\pi_k$ as
\begin{equation}
\gamma_k \triangleq 2^{r_k^{\min}/B}-1,\qquad
\pi_k \triangleq \frac{\gamma_k \sigma^2}{\left|\mathbf{h}_k^{\mathrm H}\mathbf{A}_\tau\mathbf{f}_k\right|^2+\varepsilon_{\pi}},
\label{eq:pi_def}
\end{equation}
where $\mathbf{f}_k$ is a low-complexity reference beam, e.g., the matched-filter $\mathbf{f}_k=\mathbf{h}_{\mathrm{eff},k}/\|\mathbf{h}_{\mathrm{eff},k}\|_2$, and $\varepsilon_{\pi}>0$ is a small constant.

This proxy approximates the transmit power required to meet the SINR target under a noise-limited approximation and is used only for ranking and removal.
{The admission priority is determined by the snapshot-wise required-power proxy $\pi_k$, which jointly reflects effective channel quality, QoS target, and current analog steering state.
Users with larger $\pi_k$ are more costly to satisfy and are removed first in Stage I.}
In each iteration, the {offline reference solver} removes the user with the largest $\pi_k$ from $\mathcal{K}_{\mathrm{cand}}$ and repeats until feasibility is reached or $\mathcal{K}_{\mathrm{cand}}=\emptyset$, yielding an admission label $\boldsymbol{\alpha}^\star$.
We denote the resulting admitted set by $\mathcal{K}_{\mathrm{adm}}^\star=\{k:\alpha_k^\star=1\}$.

\textbf{Stage II. Utility refinement over the admitted set.}
Given the admitted set $\mathcal{K}_{\mathrm{adm}}^\star$, the {offline reference solver} refines the Stage-II utility or an equivalent sum-rate surrogate using a small number of iterations of a local solver.
A typical choice is the WMMSE method or SCA on $\mathbf{H}_{\mathrm{eff}}$ \cite{10977774,11130524,11010920,ZHANG2026156209,ZHU2026103001}.
The refinement maintains $R_k\ge r_k^{\min}$ for all $k\in\mathcal{K}_{\mathrm{adm}}^\star$ and $\|\mathbf{A}_\tau\mathbf{D}\|_F^2\le P_{\max}$.
The {offline reference solver} outputs a feasible beamformer $\mathbf{D}^\star$ and auxiliary variables that are consistent with the local KKT conditions of the refinement solver.
For WMMSE-type updates, we store per-user scalars $u_k^\star$ and $w_k^\star$ and the power-dual scalar $\nu^\star$, and denote the collected auxiliary labels by $\boldsymbol{\zeta}^\star$.
These auxiliary variables are aligned with the stationarity structure of WMMSE-type KKT conditions and are used to supervise the subsequent closed-form beamformer reconstruction.

For each snapshot, the {offline reference solver} provides $\boldsymbol{\alpha}^\star$, $\mathbf{D}^\star$, the proxy vector $\boldsymbol{\pi}$ computed by \eqref{eq:pi_def},
and KKT-related labels $\boldsymbol{\zeta}^\star$, such as $\{u_k^\star,w_k^\star\}$ and $\nu^\star$.

\subsection{KKT-Guided Reconstruction With Strict Feasibility Repair}
\label{subsec:student_kkt_recon}

Next, we design a compact {online forecaster} that outputs soft admission scores $\widehat{\boldsymbol{\alpha}}\in[0,1]^K$ and auxiliary variables $\widehat{\boldsymbol{\zeta}}$ from snapshot information. 
The inputs include the instantaneous effective CSI $\mathbf{H}_{\mathrm{eff}}$, the QoS targets $\{r_k^{\min}\}$, and pointing-uncertainty proxies such as $\widehat{\sigma}^2_{\xi,k}$ in \eqref{eq:sigma_xi_scalar}. 
Admission decisions are binarized as $\alpha_k=\mathbb{I}\{\widehat{\alpha}_k\ge \eta\}$ using a fixed threshold $\eta\in(0,1)$, where $\mathbb{I}\{\cdot\}$ denotes the indicator function.

Instead of directly regressing the digital beamformer $\mathbf{D}$, we reconstruct it using a WMMSE-inspired closed-form mapping that is consistent with the corresponding KKT conditions. 
Specifically, let $\widehat{\boldsymbol{\zeta}}$ include per-user scalars $(\widehat{u}_k,\widehat{w}_k)$ and an initial estimate of the power-dual variable $\widehat{\nu}\ge 0$.
Define
\begin{align}
\mathbf{C}(\nu)
\triangleq
\sum_{k=1}^{K} \alpha_k\,\widehat{w}_k\,|\widehat{u}_k|^2 \,
\mathbf{h}_{\mathrm{eff},k}\mathbf{h}_{\mathrm{eff},k}^{\mathrm H}
+\nu \mathbf{I},\notag\\
\qquad
\mathbf{d}_k(\nu)
\triangleq
\mathbf{C}(\nu)^{-1}\Big(\alpha_k\,\widehat{w}_k\,\widehat{u}_k^{\!*}\,\mathbf{h}_{\mathrm{eff},k}\Big),
\label{eq:kkt_recon_wmmse}
\end{align}
where $\mathbf{D}(\nu)=[\mathbf{d}_1(\nu),\ldots,\mathbf{d}_K(\nu)]$.
We select $\nu$ by a few bisection steps, since $\|\mathbf{A}_\tau\mathbf{D}(\nu)\|_F^2$ is non-increasing in $\nu$ for this WMMSE-type mapping.

To guarantee the instantaneous power constraint, we further apply the scaling projection
\begin{equation}
\mathbf{D}\leftarrow \mathbf{D}\cdot \min\!\left\{1,\sqrt{\frac{P_{\max}}{\|\mathbf{A}_\tau\mathbf{D}\|_F^2+\varepsilon_p}}\right\},
\label{eq:routeA_power_proj}
\end{equation}
which enforces $\|\mathbf{A}_\tau\mathbf{D}\|_F^2\le P_{\max}$ by construction and $\varepsilon_p>0$ is a small constant.

Even with the above reconstruction, QoS constraints may still be violated due to model mismatch, finite-precision effects, or an overly aggressive admission decision.
We therefore apply a deterministic strict-repair routine based on worst-first user removal followed by an add-back step.
The routine terminates in at most $K$ removals because the admitted set strictly decreases and the empty set is always feasible.
{Moreover, the power projection in \eqref{eq:routeA_power_proj} enforces the instantaneous transmit-power constraint by design. 
The removal loop iteratively reduces the admitted set and terminates once a feasible solution is found. 
Consequently, the repaired output satisfies both the instantaneous power constraint and the QoS requirements of all retained users.}

Let $\mathcal{K}_{\mathrm{adm}}=\{k:\alpha_k=1\}$ and compute the QoS gap
$g_k=\big(r_k^{\min}-R_k(\mathbf{A}_\tau,\mathbf{D})\big)_+$ for $k\in\mathcal{K}_{\mathrm{adm}}$.
If all $g_k=0$, the current solution is feasible.
Otherwise, we remove the user with the largest normalized violation,
\begin{equation}
k^\dagger = \arg\max_{k\in\mathcal{K}_{\mathrm{adm}}} \ \frac{g_k}{\pi_k+\varepsilon_{\pi}},
\label{eq:drop_metric}
\end{equation}
where $\pi_k$ is the required-power proxy in \eqref{eq:pi_def}.

We set $\alpha_{k^\dagger}\leftarrow 0$, update $\mathcal{K}_{\mathrm{adm}}\leftarrow \mathcal{K}_{\mathrm{adm}}\setminus\{k^\dagger\}$, and reconstruct $\mathbf{D}$ again.
We repeat until feasibility is reached or $\mathcal{K}_{\mathrm{adm}}=\emptyset$.
After feasibility is achieved, we add back removed users in ascending $\pi_k$ order and keep a user only if all constraints remain satisfied.
Finally, we run a few iterations of a local refinement initialized from the repaired $\mathbf{D}$, and accept an iteration only if it preserves $R_k\ge r_k^{\min}$ for all admitted users and satisfies \eqref{eq:routeA_power_proj}.

\subsection{Distillation-Based Training Objective}
\label{subsec:trainloss_aligned}

The {online forecaster} is trained to distill the admission decisions {generated by the offline reference solver} and KKT-related auxiliary variables, while penalizing QoS violations.
For admission distillation, we use the binary cross-entropy loss
$\mathcal{L}_{\mathrm{adm}}=-\frac{1}{K}\sum_{k=1}^{K}\!\left[\alpha_k^\star\log(\sigma(\widehat{\alpha}_k))+(1-\alpha_k^\star)\log\!\big(1-\sigma(\widehat{\alpha}_k)\big)\right]$,
where $\sigma(\cdot)$ is the sigmoid.
For the utility-related term $\mathcal{L}_{\mathrm{util}}$, a typical choice is the regression loss $\|\widetilde{\mathbf{D}}-\mathbf{D}^\star\|_F^2$, or a negative-utility surrogate evaluated on $(\widetilde{\boldsymbol{\alpha}},\widetilde{\mathbf{D}})$.
The overall beamforming distillation objective is defined as
\begin{align}
\mathcal{L}_{\mathrm{bf}}
&=
\lambda_{\mathrm{adm}}\,\mathcal{L}_{\mathrm{adm}}(\widehat{\boldsymbol{\alpha}},\boldsymbol{\alpha}^\star)
+\lambda_{\zeta}\,\|\widehat{\boldsymbol{\zeta}}-\boldsymbol{\zeta}^{\star}\|_2^2\notag\\
&\quad+\lambda_{\mathrm{qos}}\sum_{k=1}^{K}\varpi_k\,
\phi_{\beta}\!\Big(\widetilde{\alpha}_k r_k^{\min}-R_k(\mathbf{A}_\tau,\widetilde{\mathbf{D}})\Big)
\notag\\
&\quad
-\lambda_{\mathrm{qar}}\cdot \frac{1}{K}\sum_{k=1}^{K}\widetilde{\alpha}_k
+\lambda_{\mathrm{util}}\;\mathcal{L}_{\mathrm{util}}(\widetilde{\mathbf{D}},\mathbf{D}^\star),\notag\\
&\quad
\varpi_k=1+c_{\omega}\widehat{\sigma}^2_{\xi,k}.
\label{eq:trainloss_aligned}
\end{align}
where $\phi_{\beta}(x)=\frac{1}{\beta}\log(1+\exp(\beta x))$ is a smooth penalty for QoS violations.
The pair $(\widetilde{\boldsymbol{\alpha}},\widetilde{\mathbf{D}})$ denotes the final outputs after KKT-guided reconstruction and the power projection in \eqref{eq:routeA_power_proj}.
The vector $\widetilde{\boldsymbol{\alpha}}$ is the binary admission decision after thresholding and strict feasibility repair.
The uncertainty-aware weight $\varpi_k=1+c_{\omega}\widehat{\sigma}^2_{\xi,k}$ assigns a larger QoS-violation cost to users with higher pointing uncertainty.

We first warm up the {online forecaster} using admission and auxiliary distillation, then enable the feasibility-aware term.
The strict repair is always applied at inference.
It can also be enabled during training-time evaluation to reduce train-test mismatch.
Overall, the main steps of the proposed target-slot QoS-driven beamforming method are summarized in Algorithm~\ref{alg:routeA_aligned}.

\subsection{Algorithm and Complexity}

\begin{algorithm}[!t]
\caption{QoS-driven downlink beamforming with KKT-guided reconstruction and strict repair}
\label{alg:routeA_aligned}
\footnotesize
\begin{algorithmic}[1]
\Require Snapshot $(\mathbf{H}_\tau,\mathbf{A}_\tau)$, QoS targets $\{r_k^{\min}\}$, $P_{\max}$
\State $\mathbf{H}_{\mathrm{eff}}\leftarrow \mathbf{H}_\tau^{\mathrm H}\mathbf{A}_\tau$
\State Extract snapshot features from CSI $\mathbf{H}_{\mathrm{eff}}$, QoS targets, and uncertainty proxies
\State {The online forecaster} outputs $(\widehat{\boldsymbol{\alpha}},\widehat{\boldsymbol{\zeta}})$ and binarize $\alpha_k=\mathbb{I}\{\widehat{\alpha}_k\ge\eta\}$
\State Compute $\pi_k$ via \eqref{eq:pi_def} and set $\mathcal{K}_{\mathrm{adm}}\leftarrow \{k:\alpha_k=1\}$
\Repeat
  \State Reconstruct $\mathbf{D}$ via \eqref{eq:kkt_recon_wmmse}
  \State Choose $\nu$ by bisection and apply \eqref{eq:routeA_power_proj}
  \State Compute $g_k=\big(r_k^{\min}-R_k(\mathbf{A}_\tau,\mathbf{D})\big)_+$ for $k\in\mathcal{K}_{\mathrm{adm}}$
  \If{all $g_k=0$} \State \textbf{break} \EndIf
  \State Set $k^\dagger$ by \eqref{eq:drop_metric}, update $\alpha_{k^\dagger}\leftarrow 0$, and update $\mathcal{K}_{\mathrm{adm}}\leftarrow \mathcal{K}_{\mathrm{adm}}\setminus\{k^\dagger\}$
\Until{$\mathcal{K}_{\mathrm{adm}}=\emptyset$}
\State Attempt to add back removed users in ascending $\pi_k$ order and keep a user if feasibility is preserved
\State QoS-safe refinement iterations
\State Output repaired $(\widetilde{\boldsymbol{\alpha}},\widetilde{\mathbf{D}})$
\end{algorithmic}
\end{algorithm}

We report the per-snapshot computational complexity of the proposed online procedure.
Let $K$ be the number of users, $N_{\mathrm{RF}}$ the number of RF chains, and $M$ the number of antennas.
Let $N_{\mathrm{bis}}$ be the number of bisection steps for selecting the power-dual variable $\nu$.
Let $N_{\mathrm{drop}}$ be the number of worst-first removal rounds in the strict-repair loop, where $0\le N_{\mathrm{drop}}\le K$.
Let $N_{\mathrm{ref}}$ be the number of capped refinement iterations.

Forming $\mathbf{H}_{\mathrm{eff}}=\mathbf{H}_\tau^{\mathrm H}\mathbf{A}_\tau$ costs $O(KMN_{\mathrm{RF}})$ in general, and this step can be skipped if $\mathbf{H}_{\mathrm{eff}}$ is directly available.
For a fixed $\nu$, constructing $\mathbf{C}(\nu)$ costs $O(KN_{\mathrm{RF}}^2)$.
Solving the linear system for all users can be implemented by one matrix factorization with cost $O(N_{\mathrm{RF}}^3)$ followed by $K$ back-substitutions with cost $O(KN_{\mathrm{RF}}^2)$.
Therefore, the reconstruction cost per $\nu$ is $O(N_{\mathrm{RF}}^3+KN_{\mathrm{RF}}^2)$.
Evaluating $\mathbf{G}=\mathbf{H}_{\mathrm{eff}}\mathbf{D}$ and the resulting SINR/rates costs $O(K^2N_{\mathrm{RF}})$.
The {online forecaster} network inference can be written as $O(C_{\mathrm{NN}})$ and is typically dominated by the above matrix operations.

Overall, the bisection contributes $O\!\big(N_{\mathrm{bis}}(N_{\mathrm{RF}}^3+KN_{\mathrm{RF}}^2)\big)$ per reconstruction.
The strict repair repeats one reconstruction and one rate evaluation for at most $N_{\mathrm{drop}}$ rounds.
The refinement stage runs for $N_{\mathrm{ref}}$ iterations with the same order of matrix operations and rate evaluation.
As a result, the total per-snapshot online complexity is
$O\!\Big(KMN_{\mathrm{RF}}+(N_{\mathrm{drop}}+1)\big(N_{\mathrm{bis}}(N_{\mathrm{RF}}^3+KN_{\mathrm{RF}}^2)+K^2N_{\mathrm{RF}}\big)+N_{\mathrm{ref}}(N_{\mathrm{RF}}^3+KN_{\mathrm{RF}}^2+K^2N_{\mathrm{RF}})+C_{\mathrm{NN}}\Big)$.

In hybrid beamforming, $N_{\mathrm{RF}}$ is typically small and we cap $(N_{\mathrm{bis}},N_{\mathrm{drop}},N_{\mathrm{ref}})$ by small constants.
This keeps the online computation bounded per slot.
{Therefore, the proposed online stage avoids expensive iterative optimization, allowing admission and digital beamforming decisions to be computed within strict per-slot latency constraints.}
The {offline reference solver} cost is incurred only during training and is amortized through distillation.

\section{Simulation Results}\label{sec:results}

We present two complementary sets of simulations.
First, we evaluate VL-LLM on multi-horizon attitude forecasting using HAP flight telemetry and quantify its {effect on proactive} analog beamforming over the target-slot window.
Second, we benchmark the proposed QoS-driven beamforming solver on synthetic multiuser snapshots under strict instantaneous QoS and transmit-power constraints.

\subsection{Comparison Algorithms and Simulation Parameters}
\label{sec:sim_setup}

For attitude {forecasting}, we compare the proposed VL-LLM with TimeLLM~\cite{jin2024timellm}, PatchTST~\cite{Yuqietal-2023-PatchTST}, TimesNet~\cite{wu2023timesnet}.
{We also evaluate prompt ablations, input modes. 
Robustness is further tested against inertial measurement unit (IMU) noise, calibration confidence, and inference latency.
Performance is measured by mean absolute error (MAE) and root mean square error (RMSE) in degrees, along with the horizon-$H_{\mathrm{pred}}$ mean MAE in the target-slot window.
For calibration, we use the scalar mismatch magnitude $\|\Delta\boldsymbol{\omega}\|_2$ in \eqref{eq:calib_Zt}.}

{For QoS-driven beamforming, we compare the proposed learning-assisted solver with proximal policy optimization (PPO), PPO-$\lambda$~\cite{10681823,ADAM2025102771}, and a denoising diffusion probabilistic model (DDPM)-based generative baseline~\cite{10750041,11169307}.
To evaluate attitude compensation impact, we consider no compensation, reactive, forecast-based, and ideal modes. 
We also test under different channel conditions, user distributions, and admission priorities.}

{Table~\ref{tab:sim_settings} summarizes the main experimental settings.
These include the real-flight telemetry source, the delay-aware {forecasting} setup, the backbone and training configuration, and the communication evaluation settings.}

\begin{table*}[t]
\centering
\footnotesize
\caption{{Main experimental settings for {forecasting} and communication evaluation.}}
\label{tab:sim_settings}
\setlength{\tabcolsep}{5pt}
\begin{tabular}{p{0.24\linewidth} p{0.68\linewidth}}
\toprule
{\textbf{Category}} & {\textbf{Setting}} \\
\midrule
{{Forecasting} data \& timing} &
{Real-flight telemetry, $f_s = 10~Hz$, train/validation/test split = $70\%/10\%/20\%$, $L_{\mathrm{win}}=192$, $L_{\mathrm{label}}=24$, $H_{\mathrm{pred}}=12$, decision delay $d=6$, target-slot window $h\in\{7,\ldots,12\}$} \\
{Forecasting baselines} &
{TimeLLM, PatchTST, TimesNet} \\
{Ablations and robustness} &
{task-only, task+physics, visual-only, numeric-only; IMU-noise, calibration-confidence, and latency tests} \\
{Forecasting model \& training} &
{DeepSeek-R1-7B, ViLT, AdamW, early stopping patience = 15, BF16 mixed precision} \\
{Loss weights} & 
{$(\lambda_{\mathrm{adm}},\lambda_{\zeta},\lambda_{\mathrm{qos}},\lambda_{\mathrm{qar}},\lambda_{\mathrm{util}},c_{\omega}) = (2, 1, 20, 0.1, 2, 0)$, $(\lambda_{\mathrm{vel}},\lambda_{\mathrm{acc}},\lambda_{\mathrm{yaw}},\lambda_{\circ}) = (0.02, 0, 0, 0)$} \\
{Communication baselines} &
{DualFormer (proposed), PPO, PPO-$\lambda$, DDPM} \\
{Evaluation modes} &
{none, reactive, forecast-based, ideal, Rician strong, Rician weak; uniform, clustered, edge-biased, forecasted-QoS-difficulty, channel-gain, random; $K=10$} \\
{Digital beamforming solver setting} &
{$K=10$ users; 12$\times$12 UPA; validation: seed 2025, 2048 samples; test: seed 2026, 2048 samples} \\
{Common post-processing} &
{Admission gate with $k_{\min}=8$, deterministic strict repair, QoS-safe WMMSE refinement with 10 iterations, DDPM uses best-of-32 sampling and two WMMSE warm-start iterations} \\
{Hardware} &
{NVIDIA RTX 5090 GPU, 32 GB memory} \\
\bottomrule
\end{tabular}
\end{table*}

\subsection{Performance Evaluation}
\label{sec:performance_eval}

\subsubsection{VL-LLM for Delay-Aware Attitude Forecasting}

{We first evaluate whether the proposed VL-LLM can provide sufficiently accurate short-term attitude forecasts for delay-aware proactive beam steering. 
We compute MAE and RMSE in degrees and evaluate yaw errors using wrap-to-$(-180^\circ,180^\circ]$. 
We focus on the target-slot windoww $h\in\{d+1,\ldots,H_{\mathrm{pred}}\}$. 
We report horizon-$H_{\mathrm{pred}}$ metrics to reflect forecast-based beam-steering quality under the modeled decision delay.}

\begin{figure*}[t]
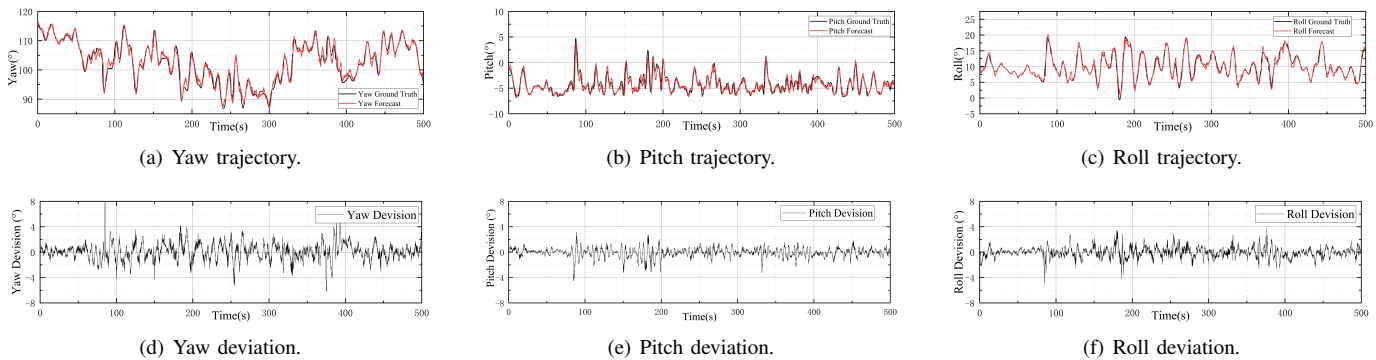

  \centering
  \subfigure[Yaw trajectory.]{%
    \includegraphics[width=0.31\linewidth]{Yaw.png}%
    \label{fig:yaw_pred}%
  }\hfill
  \subfigure[Pitch trajectory.]{%
    \includegraphics[width=0.31\linewidth]{Pitch.png}%
    \label{fig:pitch_pred}%
  }\hfill
  \subfigure[Roll trajectory.]{%
    \includegraphics[width=0.31\linewidth]{Roll.png}%
    \label{fig:roll_pred}%
  }\\
  \subfigure[Yaw deviation.]{%
    \includegraphics[width=0.31\linewidth]{Yaw_Devision.png}%
    \label{fig:yaw_dev}%
  }\hfill
  \subfigure[Pitch deviation.]{%
    \includegraphics[width=0.31\linewidth]{Pitch_Devision.png}%
    \label{fig:pitch_dev}%
  }\hfill
  \subfigure[Roll deviation.]{%
    \includegraphics[width=0.31\linewidth]{Roll_Devision.png}%
    \label{fig:roll_dev}%
  }
  \caption{{Attitude trajectories and forecasting deviations for yaw, pitch, and roll.}}
  \label{fig:pred_traj_dev}
\end{figure*}

\begin{table}[t]
\centering
\footnotesize
\caption{{Forecasting performance, ablation, robustness, and latency analysis.}}
\label{tab:pred_all_compact}
\begin{tabular}{lcc}
\toprule
Setting & MAE & RMSE \\
\midrule
\multicolumn{3}{c}{Main baseline comparison} \\
\midrule
VL-LLM (Proposed) & \textbf{0.3722} & \textbf{0.6188} \\
TimeLLM~\cite{jin2024timellm} & 0.5880 & 0.8990 \\
PatchTST~\cite{Yuqietal-2023-PatchTST} & 0.5890 & 0.8770 \\
TimesNet~\cite{wu2023timesnet} & 0.9830 & 1.9430 \\
\midrule
\multicolumn{3}{c}{{Prompt ablation}} \\
\midrule
{Task-only} & {0.3872} & {0.6295} \\
{Task+physics} & {0.3851} & {0.6328} \\
\midrule
\multicolumn{3}{c}{{Input-mode comparison}} \\
\midrule
{Visual-only} & {0.4735} & {0.7673} \\
{Numeric-only} & {0.4299} & {0.7072} \\
\midrule
\multicolumn{3}{c}{{IMU-noise robustness}} \\
\midrule
{Noise std = 0.005} & {0.3799} & {0.6305} \\
{Noise std = 0.010} & {0.3799} & {0.6302} \\
{Noise std = 0.020} & {0.3800} & {0.6306} \\
\midrule
\multicolumn{3}{c}{{Latency profiling}} \\
\midrule
{Total mean (ms)} & \multicolumn{2}{c}{{36.24}} \\
{Total p99 (ms)} & \multicolumn{2}{c}{{40.13}} \\
\bottomrule
\end{tabular}
\end{table}

Fig.~\ref{fig:pred_traj_dev} shows that the {forecast} yaw, pitch, and roll trajectories closely follow the ground truth over representative test segments.
{Analog pointing is compensated over the actionable target-slot window defined in Sec. II-A1. 
Therefore, horizon-aligned errors are more relevant than one-step accuracy for assessing proactive beam steering quality.}

Figs.~\ref{fig:yaw_pred}, \ref{fig:pitch_pred}, and \ref{fig:roll_pred} show that VL-LLM closely follows the ground truth for yaw, pitch, and roll over representative test trajectories. 
The corresponding deviation plots in Figs.~\ref{fig:yaw_dev}, \ref{fig:pitch_dev}, and \ref{fig:roll_dev} show that the errors remain concentrated around zero and stay controlled at the forecast horizon. 
Specifically, at $h=12$, the absolute errors are within $\pm4^\circ$ for {$97.10\%$, $98.93\%$, and $99.52\%$ of yaw, pitch, and roll samples, respectively.
The corresponding 95th-percentile absolute errors at $3.29^\circ$, $1.99^\circ$, and $1.98^\circ$.} 
These horizon-aligned tail errors directly support the offline calibration of $\delta_\omega$ in \eqref{eq:delta_omega_quantile}, thereby fixing the uncertainty parameters used for robust beamforming.
{To quantify calibration-confidence sensitivity, we vary the confidence level \((1-\rho)\) from 0.80 to 0.99. 
The interval coverage increases from 0.9877 to 0.9968. 
Meanwhile, the MAE and RMSE remain unchanged. 
This confirms that higher calibration confidence mainly renders the offline robustness margin more conservative, without affecting the point-forecast accuracy of VL-LLM.}

{Table~\ref{tab:pred_all_compact} shows that VL-LLM achieves the best aggregate {forecasting} accuracy among the compared baselines.
Compared with the strongest learning baseline PatchTST, VL-LLM reduces the RMSE from $0.8770^\circ$ to $0.6188^\circ$, corresponding to a $29.4\%$ improvement.
It also reduces the MAE from $0.5890^\circ$ to $0.3722^\circ$.
These results confirm that the proposed {VL-LLM} provides a substantially more accurate basis for delay-aware proactive beam steering.
Removing prompt components or replacing the multimodal input with single-modality input degrades {forecasting} accuracy. 
This confirms that both structured task priors and multimodal telemetry representations are beneficial.}
{VL-LLM also remains stable under moderate IMU perturbations. 
From the reference model to the noisiest case, the RMSE increases from $0.6188^\circ$ to $0.6306^\circ$, an increase of less than $2\%$.}

\subsubsection{Communication Performance After Attitude Compensation}

{We then evaluate whether improved attitude forecasting translates into actual communication gains after compensation and beamforming refinement.
To this end, we compare different compensation modes.
We also test the method under different channel conditions, user distributions, and admission priorities.}

\begin{table}[t]
\centering
\footnotesize
\caption{{Communication performance under different compensation and system settings.}}
\label{tab:comm_closed_loop_compact}
\begin{tabular}{lcc}
\toprule
{Setting} & {QAR} & {Sum-rate} \\
\midrule
\multicolumn{3}{c}{{Compensation mode}} \\
\midrule
{No compensation} & {0.1984} & {25.6294} \\
{Reactive} & {0.1986} & {25.7547} \\
{forecast-based} & {0.1994} & {25.8342} \\
{Ideal} & {0.2000} & {28.0534} \\
\midrule
\multicolumn{3}{c}{{Channel condition}} \\
\midrule
{Rician weak} & {0.1910} & {24.1283} \\
{Rician strong} & {0.1994} & {25.8342} \\
\midrule
\multicolumn{3}{c}{{User distribution}} \\
\midrule
{Uniform} & {0.1994} & {25.8342} \\
{Clustered} & {0.1980} & {25.5361} \\
{Edge-biased} & {0.1986} & {25.7268} \\
\midrule
\multicolumn{3}{c}{{Admission priority}} \\
\midrule
{forecasted-QoS-difficulty} & {0.1994} & {25.8342} \\
{Channel-gain} & {0.1978} & {25.6584} \\
{Random} & {0.1990} & {25.7638} \\
\bottomrule
\end{tabular}
\end{table}

{Table~\ref{tab:comm_closed_loop_compact} summarizes the end-to-end communication results after attitude compensation under different compensation modes and system settings.
Forecast-based compensation consistently outperforms both no compensation and reactive modes, while the ideal mode remains the upper bound.
Compared with the reactive baseline, forecast-based compensation increases the QAR from $0.1986$ to $0.1994$ and the sum-rate from $25.7547$ to $25.8342$ under the same settings.
Although the gain is moderate, it is consistent and confirms that the proposed {VL-LLM} provides communication benefits.}

{The same table further shows that the proposed method remains effective under non-LoS (NLoS) uncertainty and different system settings.
The two Rician settings lead to different communication performance, with the strong-Rician case achieving higher QAR and sum-rate.
The clustered and edge-biased user layouts are slightly more challenging than the uniform layout. 
This reflects less balanced QoS difficulty under the same power budget.
Among the admission strategies, the forecasted-QoS-difficulty priority achieves the best overall tradeoff in both QAR and sum-rate.
This is consistent with its design objective of ranking users according to their instantaneous QoS difficulty under the current analog pointing and power budget.}

\subsubsection{Learning-Assisted QoS-driven Downlink Beamforming Solver}

{Finally, we benchmark the Stage-II digital beamforming solver itself using static multiuser LoS scenarios.
This experiment uses identical settings for user admission, deterministic repair, and power budget.}

\begin{table}[t]
\centering
\footnotesize
\caption{Benchmark of the QoS-driven digital beamforming solver.}
\label{tab:solver_test_metrics}
\begin{tabular}{lccc}
\toprule
Method & Feasible & QAR & Sum-rate \\
\midrule
DualFormer (Proposed) & \textbf{1.0000} & \textbf{0.9944} & \textbf{77.3820} \\
PPO & 0.9980 & 0.8100 & 68.5984 \\
PPO-$\lambda$ & 1.0000 & 0.6810 & 68.8129 \\
DDPM & 1.0000 & 0.8142 & 66.2873 \\
\bottomrule
\end{tabular}
\end{table}

{Table~\ref{tab:solver_test_metrics} shows that DualFormer achieves the best sum-rate and the highest QAR while maintaining near-unity feasibility under the same user admission, repair, and power budget.
Compared with the best-QAR baseline DDPM, DualFormer improves the QAR from $0.8142$ to $0.9944$.
Compared with the best-sum-rate baseline PPO-$\lambda$, DualFormer improves the sum-rate from $68.8129$ to $77.3820$, a gain of $12.5\%$.
These results support the design in Sec.~\ref{sec:algorithm}, where KKT-guided reconstruction reduces approximation error and deterministic repair prevents QoS violations.}

\section{Conclusion}
This paper investigated HAP downlink mmWave communications.
{It addressed platform attitude-induced beam misalignment and stringent per-slot online decision requirements}.
We developed a multimodal LLM-enabled beamforming framework for robust HAP downlink communications. 
Specifically, we designed a VL-LLM that learned from multivariate real-flight telemetry to {forecast short-term HAP attitudes under platform shaking}.
We also introduced an offline {forecast}-error calibration {procedure}.
This yielded reliable upper bounds on the residual errors.
{Based on the compensated analog beamformer, we further developed a learning-assisted digital beamforming and admission scheme.
It incorporated lightweight feasibility enforcement to satisfy instantaneous transmit-power and QoS constraints.
Using real-flight telemetry sampled at 10 Hz, the proposed framework achieved accurate short-term attitude forecasting with p99 latency of 40.13 ms. 
This enabled delay-aware proactive beam steering and consistent communication gains over reactive and no compensation baselines across diverse settings.
Future work will consider stronger NLoS effects, nonstationary sensing uncertainty, cooperative aerial-network settings.}

\appendices
\section{Proof of Lemma~\ref{lem:TI_certificate}}

\begin{proof}
Fix a user $k$ and consider the target-window horizons $h\in\{d{+}1,\ldots,H_{\mathrm{pred}}\}$ programmed at time $t$.
Under Assumption~A2, the forecasted steering angles satisfy $(\vartheta_k,\varphi_k)\in\mathcal{S}_k$ on an event with probability at least $1-\rho_s$.
Under the small-detuning regime where the quadratic approximation in Eq.~(\ref{eq:A2}) of the main manuscript holds, the normalized main-lobe gain loss can be expressed as
\begin{equation}
\Delta G_{\mathrm{UPA},k}
\approx
\Delta\boldsymbol\omega_{t+h|t}^{\top}\mathbf{Q}_k(\vartheta_k,\varphi_k)\Delta\boldsymbol\omega_{t+h|t}.
\end{equation}

Since $c_x,c_y>0$ and $\mathbf{Q}_k(\vartheta,\varphi)=\mathbf{J}_{\xi,k}(\vartheta,\varphi)^{\top}\mathrm{diag}(c_x,c_y)\mathbf{J}_{\xi,k}(\vartheta,\varphi)$, we have $\mathbf{Q}_k(\vartheta,\varphi)\succeq \mathbf{0}$.
Therefore, the Rayleigh quotient bound gives
\begin{equation}
\Delta\boldsymbol\omega_{t+h|t}^{\top}\mathbf{Q}_k(\vartheta_k,\varphi_k)\Delta\boldsymbol\omega_{t+h|t}
\le
\lambda_{\max}\!\big(\mathbf{Q}_k(\vartheta_k,\varphi_k)\big)\,
\big\|\Delta\boldsymbol\omega_{t+h|t}\big\|_2^2.
\label{eq:rayleigh_bound_appendix}
\end{equation}

By the definition of $L_k^2$ in Eq.~(24) of the main manuscript, for all $(\vartheta,\varphi)\in\mathcal{S}_k$, 
$\lambda_{\max}\!\big(\mathbf{Q}_k(\vartheta,\varphi)\big)\le L_k^2$.
Hence, whenever $(\vartheta_k,\varphi_k)\in\mathcal{S}_k$,
\begin{equation}
\Delta G_{\mathrm{UPA},k}
\le
L_k^2\,\big\|\Delta\boldsymbol\omega_{t+h|t}\big\|_2^2.
\label{eq:bound_Lk_appendix}
\end{equation}

Next, under Assumption~A1, the target-window event
\begin{equation}
\max_{h\in\{d{+}1,\ldots,H_{\mathrm{pred}}\}}
\big\|\Delta\boldsymbol\omega_{t+h|t}\big\|_2
\le
\delta_\omega
\end{equation}
holds with probability at least $1-\rho$.

On this event, \eqref{eq:bound_Lk_appendix} implies that for every $h\in\{d{+}1,\ldots,H_{\mathrm{pred}}\}$,
\begin{equation}
\Delta G_{\mathrm{UPA},k}
\le
L_k^2\,\delta_\omega^2.
\end{equation}

If $L_k^2\delta_\omega^2\le\epsilon$, then $\Delta G_{\mathrm{UPA},k}\le\epsilon$ holds uniformly over all target-window horizons programmed at time $t$.

Finally, letting $E_1$ denote the event in Assumption~A1 and $E_2$ denote the event $\{(\vartheta_k,\varphi_k)\in\mathcal{S}_k\}$ in Assumption~A2, we have
\begin{equation}
\mathbb{P}(E_1\cap E_2)
\ge
1-\mathbb{P}(E_1^c)-\mathbb{P}(E_2^c)
\ge
1-(\rho+\rho_s),
\end{equation}
which completes the proof.
\end{proof}

\section{Communication Performance Under Different User Numbers and Calibration Settings}

To complement the main manuscript evaluation, we provide additional communication performance under different user-number settings and calibration settings.
In addition to the default case $K=10$ reported in the main manuscript, we further test $K=8$ and $K=12$ to assess whether the proposed delay-aware forecast-then-beamform method remains effective when the users decreases or increases around the default operating point.
The simulation protocol follows the same setting as in the main manuscript, and only the user number or calibration setting is changed.

\subsection{User-Number Variation}

\begin{table}[t]
\centering
\footnotesize
\caption{Communication performance under different user-number settings.}
\label{tab:supp_usernum}
\begin{tabular}{lcc}
\toprule
Setting & QAR & Sum-rate \\
\midrule
$K=8$  & 0.2299 & 23.0020 \\
$K=10$ & 0.1994 & 25.8342 \\
$K=12$ & 0.1687 & 26.2766 \\
\bottomrule
\end{tabular}
\end{table}

Table~\ref{tab:supp_usernum} provides the communication performance results under different user-number settings compared to the main manuscript.
As the number of users increases, the admission difficulty naturally becomes higher, which reduces the QoS admission ratio (QAR).
Meanwhile, the sum-rate varies more mildly across the tested range.
This is because admitting more users does not necessarily improve feasibility, but the admitted set can still be efficiently refined under the same beamforming method.
Overall, these results provide evidence for the scalability of the algorithm proposed in the main manuscript.

\subsection{Calibration Sensitivity}

\begin{table}[t]
\centering
\footnotesize
\caption{Forecasting sensitivity to calibration confidence.}
\label{tab:supp_calib_pred}
\begin{tabular}{cccc}
\toprule
Calibration confidence $(1-\rho)$ & Interval coverage & MAE & RMSE \\
\midrule
0.80 & 0.9877 & 0.3722 & 0.6188 \\
0.85 & 0.9900 & 0.3722 & 0.6188 \\
0.90 & 0.9921 & 0.3722 & 0.6188 \\
0.95 & 0.9943 & 0.3722 & 0.6188 \\
0.99 & 0.9968 & 0.3722 & 0.6188 \\
\bottomrule
\end{tabular}
\end{table}

Table~\ref{tab:supp_calib_pred} reports the sensitivity of the offline calibration step to the confidence level $(1-\rho)$.
As $(1-\rho)$ increases, the interval coverage increases monotonically, confirming that a larger confidence level yields a more conservative calibrated bound.
By contrast, the macro MAE and RMSE remain unchanged across the sweep, since the calibration step only affects the post-hoc robustness margin and does not modify the point forecaster itself.
These results support the main-text discussion that calibration confidence mainly controls the conservativeness of the offline robustness certificate rather than the point-forecast accuracy.

\bibliography{IEEEabrv,Multimodal_Large_Language_Model_Enabled_Robust_Beamforming_for_HAP_Downlink_Communications}
\bibliographystyle{IEEEtran}
\end{document}